\def\mid{\vert}
\newcommand{\rright}{\right}
\newcommand{\lleft}{\left}
\newcommand{\rrVert}{\Vert}
\newcommand{\llVert}{\Vert}
\def\implies{\Longrightarrow}
\newtheorem{theorem}{Theorem}
\newtheorem{corollary}{Corollary}
\newtheorem{lemma}{Lemma}
\newtheorem{algorithm}{Algorithm}
\renewcommand{\hat}{\widehat}
\newcommand{\BB}{\mathbb{B}}
\newcommand{\cc}{\bar{c}}
\newcommand{\Gn}{\mathbb{G}_n}
\newcommand{\En}{\mathbb{E}_n}
\newcommand{\Ep}{\mathrm{E}}%{\mathbf{E}}
\newcommand{\sign}{\operatorname{sign}}
\def\RR{{\mathbb{R}}}
\def\supp{\operatorname{supp}}
\def\kk{{\bar\kappa}}
\def\lkk{{\underline{\kappa}}}
\def\bbb{{\bar\zeta}}
\def\BB{B_n}
\def\barEp{\bar{\mathbf{E}}}
\newcommand{\LASSO}{\sqrt{\mathrm{Lasso}}}
\def\diag{\operatorname{diag}}
\def\ii{{}}
\begin{document}
\begin{frontmatter}

\title{Pivotal estimation via square-root Lasso in nonparametric
regression}
\runtitle{$\sqrt{\mathrm{LASSO}}$ for nonparametric regression}

\begin{aug}
\author[a]{\fnms{Alexandre} \snm{Belloni}\corref{}\ead[label=e1]{abn5@duke.edu}},
\author[b]{\fnms{Victor} \snm{Chernozhukov}\ead[label=e2]{vchern@mit.edu}}
\and
\author[c]{\fnms{Lie} \snm{Wang}\ead[label=e3]{lwang@mit.edu}}
\runauthor{A. Belloni, V. Chernozhukov and L. Wang}
\affiliation{Duke University, Massachusetts Institute of Technology\break and
Massachusetts Institute of Technology}
\address[a]{A. Belloni\\
Fuqua School of Business\\
Duke University\\
100 Fuqua Drive\\
Durham, North Carolina 27708\\
USA\\
\printead{e1}} %adresu isvedimo komanda gale!
\address[b]{V. Chernozhukov\\
Department of Economics\\
Massachusetts Institute of Technology\\
52 Memorial Drive\\
Cambridge, Massachusetts 02142\\
USA\\
\printead{e2}}
\address[c]{L. Wang\\
Department of Mathematics\\
Massachusetts Institute of Technology\\
77 Massachusetts Avenue\\
Cambridge, Massachusetts 02139\\
USA\\
\printead{e3}}
\end{aug}

% HISTORY:
\received{\smonth{9} \syear{2012}}
\revised{\smonth{12} \syear{2013}}

% ABSTRACT
%
\begin{abstract}
We propose a self-tuning $\LASSO$ method that simultaneously resolves
three important practical problems in high-dimensional regression
analysis, namely it handles the unknown scale, heteroscedasticity and
(drastic) non-Gaussianity of the noise. In addition, our analysis
allows for badly behaved designs, for example, perfectly collinear
regressors, and generates
sharp bounds even in extreme cases, such as the infinite variance case
and the noiseless case, in contrast to Lasso. We establish various
nonasymptotic bounds for $\LASSO$ including prediction norm rate and sparsity.
Our analysis is based on new impact factors that are tailored for
bounding prediction norm. In order to cover heteroscedastic
non-Gaussian noise, we rely on moderate deviation theory for
self-normalized sums to achieve Gaussian-like results under weak
conditions. Moreover, we derive bounds on the performance of ordinary
least square (ols) applied to the model selected by $\LASSO$ accounting
for possible misspecification of the selected model. Under mild
conditions, the rate of convergence of ols post $\LASSO$ is as good as
$\LASSO$'s rate. As an application, we consider the use of $\LASSO$ and
ols post $\LASSO$ as estimators of nuisance parameters in a generic
semiparametric problem (nonlinear moment condition or $Z$-problem),
resulting in a construction of $\sqrt{n}$-consistent and asymptotically
normal estimators of the main parameters.
\end{abstract}

% KEYWORDS
% Pirmas kwd is didziosios raides
%
\begin{keyword}[class=AMS]
\kwd[Primary ]{62G05}
\kwd{62G08}
\kwd[; secondary ]{62G35}
\end{keyword}

\begin{keyword}
\kwd{Pivotal}
\kwd{square-root Lasso}
\kwd{model selection}
\kwd{non-Gaussian heteroscedastic}
\kwd{generic semiparametric problem}
\kwd{nonlinear instrumental variable}
\kwd{$Z$-estimation problem}
\kwd{$\sqrt{n}$-consistency and asymptotic normality after model selection}
\end{keyword}

\end{frontmatter}

%s1 #&#
\section{Introduction}\label{Sec:Intro}

We consider a nonparametric regression model:
%
%e1.1 #&#
%
\begin{equation}
y_i = f(z_i) + \sigma
\varepsilon_i,\qquad i=1,\ldots,n,
\end{equation}
where $y_i$'s are the outcomes, $z_i$'s are vectors of fixed basic
covariates, $\varepsilon_i$'s are independent noise, $f$ is the regression
function and $\sigma$ is an unknown scaling parameter. The goal is to recover
the values $(f_i)_{i=1}^n = (f(z_i))_{i=1}^n$ of the regression
function $f$ at $z_i$'s. To achieve this goal, we use linear
combinations of technical regressors $x_i=P(z_i)$ to approximate $f$,
where $P(z_i)$ is a dictionary of
$p$-vector of transformations of $z_i$. We are interested in the
high dimension low sample size case, where we potentially use
$p > n$, to obtain a flexible approximation. In particular, we are
interested in cases
where the regression function can be well approximated by a sparse linear
function of~$x_i$.\looseness=-1

The model above can be written as
$y_i = x_i' \beta_0 + r_i + \sigma\varepsilon_i$, where $f_i =
f(z_i)$ and
$r_i:= f_i - x_i' \beta_0$ is the
approximation error. The vector $\beta_0$ is defined as a solution of
an optimization problem to compute the oracle risk, which balances bias
and variance (see Section~\ref{Sec:Model}). The cardinality of the
support of $\beta_0$ is denoted by $s:= \| \beta_0\|_0$.
It is well known that ordinary least squares (ols) is generally
inconsistent when $p>n$. However, the sparsity assumption, namely that
$s \ll n$, makes it
possible to estimate these models effectively by searching for
approximately the right set of the regressors. In particular,
$\ell_1$-penalization has played a central
role \cite
{BickelRitovTsybakov2009,BVDG2011,CandesTao2007,Koltchinskii2009,MY2007,vdGeer,ZhangHuang2006,Wainright2006}.
It was demonstrated that $\ell_1$-penalized
least squares estimators can achieve the rate $\sigma\sqrt{s/n}
\sqrt{\log p}$, which is very close to the oracle rate
$\sigma\sqrt{s/n}$ achievable when the true model is known.
Importantly, in the context of linear regression, these $\ell
_1$-regularized problems can be cast as convex optimization problems
which make them computationally
efficient (computable in polynomial time). We
refer to \cite
{BickelRitovTsybakov2009,BVDG2011,BuneaTsybakovWegkamp2007,BuneaTsybakovWegkamp2007b,FanLv2006,Lounici2008,LouniciPontilTsybakovvandeGeer2009,RosenbaumTsybakov2008,vdGeer}
for a more detailed review of the existing literature which has
focused on the homoscedastic case.\looseness=-1

In this paper, we attack the problem of nonparametric regression under
non-Gaussian, heteroscedastic errors $\varepsilon_i$, having an unknown
scale $\sigma$. We propose to use a self-tuning $\LASSO$ which is
pivotal with respect to the scaling parameter~$\sigma$, and which
handles non-Gaussianity
and heteroscedasticity in the errors. The resulting rates and
performance guarantees are very similar to the Gaussian case, due to
the use of self-normalized moderate deviation theory. Such results and
properties,\footnote{Earlier literature, for example, in bounded
designs \cite{BVDG2011}, provides bounds using refinements of
Nemirovski's inequality; see \cite{Duembgen2010}. These results provide
rates as good as in the Gaussian case. However, when the design is
unbounded (e.g., regressors generated as realizations of independent
Gaussian random variables), the rates of convergence provided by these
techniques are no longer sharp. The use of self-normalized moderate
deviations in the present context allows to handle the latter cases,
with sharp rates.} particularly the pivotality with respect to the
scale, are in contrast to the previous results and methods on others
$\ell_1$-regularized methods, for example, Lasso and Dantzig selector
that use penalty levels that depend linearly on the unknown scaling
parameter $\sigma$.

There is now a growing literature on high-dimensional linear
models\footnote{There is also a literature on penalized median
regression, which can be used in the case of symmetric errors,
since these methods are independent of the unknown $\sigma$, cf.
\cite{BC-SparseQR,Wang2012}.} allowing for unknown scale $\sigma$.
St\"adler et al. \cite{StadlerBuhlmanGeer} propose a $\ell_1$-penalized maximum
likelihood estimator for parametric Gaussian regression models.
Belloni et al. \cite{BCW-SqLASSO} consider $\LASSO$ for a parametric
homoscedastic model with both Gaussian and non-Gaussian errors and
establish that the choice of the penalty parameter in $\LASSO$
becomes pivotal with respect to $\sigma$. van de Geer
\cite{vandeGeerBuhlmannRitov2013} considers an equivalent formulation of the
(homoscedastic)
$\LASSO$ to establish finite sample results and derives
results in the parametric homoscedastic Gaussian setting.
Chen and Dalalyan~\cite{ChenDalalyan2012} consider scaled fused Dantzig selector to
allow for different sparsity patterns and provide
results under homoscedastic Gaussian errors. Belloni and Chernozhukov~\cite{BC-PostLASSO} study Lasso with a plug-in estimator of the noise level based on
Lasso iterations
in a parametric homoscedastic setting. Chr{\'e}tien and Darses
\cite{ChretienDarses2012} study plug-in estimators and a trade-off penalty choice between
fit and penalty in the parametric case with homoscedastic Gaussian
errors under random support assumption (similar to
\cite{CandesPlan2009}) using coherence condition. In a trace
regression model for recovery of a matrix, \cite{Klopp2011} proposes and
analyses a version of the $\LASSO$ under
homoscedasticity. A comprehensive review is
given in \cite{GiraudHuetVerzelen2012}. All these works rely
essentially on the restricted eigenvalue condition
\cite{BickelRitovTsybakov2009} and homoscedasticity and do not
differentiate penalty levels across components.

%In the homoscedastic parametric model studied in \cite{BCW-SqLASSO},
%the choice of the penalty parameter in $\LASSO$ becomes pivotal given
%the covariates and the distribution of the error term. In contrast,

In order to address the nonparametric, heteroscedastic and non-Gaussian
cases, we develop covariate-specific penalty loadings.
To derive a practical and theoretically justified choice of penalty
level and loadings, we need to account for the impact of the
approximation error. We rely on moderate deviation
theory for self-normalized sums of \cite{jing:etal} to achieve
Gaussian-like results in many
non-Gaussian cases provided $\log p = o(n^{1/3})$, improving upon
results derived in the parametric case that required $\log p
\lesssim\log n$, see \cite{BCW-SqLASSO}. (In the context of standard
Lasso, the self-normalized moderate deviation theory was first employed
in \cite{BellChenChernHans:nonGauss}.) %We perform a thorough
%nonasymptotic theoretical analysis of the choice of the penalty level
%and loadings.

Our first contribution is the proposal of new design and noise impact
factors, in order to allow for more general designs. Unlike previous
conditions, these factors are tailored for establishing performance
bounds with respect to the prediction norm, which is appealing in
nonparametric problems. In particular, collinear designs motivate our
new condition. In studying their properties, we further exploit the
oracle based definition of the approximating function. The analysis
based on these impact factors complements the analysis based on
restricted eigenvalue proposed in \cite{BickelRitovTsybakov2009} and
compatibility condition in \cite{GeerBuhlmann2009}, which are more
suitable for establishing rates for $\ell_k$-norms. %Indeed the
%proposed impact factors will be well-behaved if the preceding
%conditions are well-behaved but they do not suffice to derive $

The second contribution is a set of finite sample upper bounds and lower
bounds for estimation errors under prediction norm, and upper bounds on the
sparsity of the $\LASSO$ estimator. These results are ``geometric,''
in that
they hold conditional on the design and errors provided some key events
occur. We further develop primitive sufficient conditions that allow for
these results to be applied to heteroscedastic non-Gaussian errors. We
also give
results for other norms in the supplementary material \cite{BCWsupp}.

The third contribution develops properties of the estimator that
applies ordinary least squares (ols)\vspace*{1pt} to the model selected by $\LASSO$.
Our focus is on the case that $\LASSO$ fails to achieve perfect model
selection, including cases where the oracle model is not completely
selected by $\LASSO$. This is usually the case in a nonparametric
setting. This estimator intends to remove the potentially significant
bias toward zero introduced by the $\ell_1$-norm regularization
employed in the $\LASSO$ estimator.

The fourth contribution is to study two extreme cases: (i)
parametric noiseless case and (ii) nonparametric infinite
variance case. $\LASSO$ has interesting theoretical
properties for these two extreme cases. For case (i), $\LASSO$
can achieves exact recovery in sharp contrast to Lasso. For case (ii),
$\LASSO$ estimator can
still be consistent with penalty choice that does not depend on the
scale of the noise. We develop the necessary modifications of
the penalty loadings and derive finite-sample bounds for the case of
symmetric noise. When noise
is Student's $t$-distribution with $2$ degrees of freedom, we recover
Gaussian-noise
rates up to a multiplicative factor of $\log^{1/2} n$.

The final contribution is to provide an application of $\LASSO$ methods
to a generic semiparametric problem, where
some low-dimensional parameters are of interest and $\LASSO$ methods
are used to estimate nonparametric nuisance parameters. These
results extend the $\sqrt{n}$ consistency and
asymptotic normality results of \cite
{BellChernHans:Gauss,BellChenChernHans:nonGauss}
on a rather specific linear model to a generic nonlinear problem, which
covers smooth frameworks in statistics and in econometrics, where
the main parameters of interest are defined via nonlinear instrumental
variable/moment conditions
or $Z$-conditions containing unknown nuisance functions (as in \cite
{chamberlain}). This and all the above results illustrate the wide
applicability of the proposed estimation procedure.

\textit{Notation}. To make asymptotic statements, we assume that $n
\to
\infty$ and $p=p_n \to\infty$, and we allow for $s=s_n \to
\infty$. In what follows, all parameters are indexed by the sample size
$n$, but we omit the index whenever it does not cause confusion. We
work with i.n.i.d., independent but not necessarily identically
distributed data, $(w_i)_{i=1}^n$, with $k$-dimensional real vectors $w_i$
containing $y_i \in\mathbb{R}$ and $z_i \in\mathbb{R}^{p_z}$, the
latter taking values in a set $\mathcal{Z}$. We use the notation $(a)_+
= \max\{a,0\}$, $a \vee b = \max\{ a, b\}$ and $a \wedge b = \min\{
a,
b \}$. The $\ell_2$-norm is denoted by $\|\cdot\|$, the $\ell_1$-norm
is denoted by $\|\cdot\|_1$, the $\ell_\infty$-norm is denoted by $\|
\cdot\|_\infty$,
and the $\ell_0$-``norm'' $\|\cdot\|_0$ denotes the number of nonzero
components of a vector. The transpose of a matrix $A$ is denoted by
$A'$. Given a vector $\delta\in\RR^p$, and
a set of indices $T \subset\{1,\ldots,p\}$, we denote by $\delta_T$ the
vector in which $\delta_{Tj} = \delta_j$ if $j\in T$, $\delta_{Tj}=0$
if $j \notin T$, and by $|T|$ the cardinality of $T$. For a measurable
function $f\dvtx\RR^{k} \to\RR$, the symbol $\Ep[f(w_i)]$ denotes the
expected value of $f(w_i)$; $\En[f(w)]$ denotes the average $n^{-1}
\sum_{i=1}^n f(w_i)$; $\barEp[f(w)]$ denotes the average expectation
$n^{-1} \sum_{i=1}^n \Ep[f(w_i)]$; and $\Gn(f(w))$ denotes $n^{-1/2}
\sum_{i=1}^{n} (f(w_{i}) - \Ep[ f(w_{i}) ])$. We will work with
regressor values $(x_i)_{i=1}^n$ generated via $x_i = P(z_i)$, where
$P(\cdot)\dvtx\mathcal{Z} \mapsto\RR^p$ is a measurable dictionary of
transformations, where $p$ is potentially larger than $n$. We define
the prediction norm of a vector $\delta\in\RR^p$ as $\|\delta\|_{2,n}
= \{\En[(x_\ii'\delta)^2]\}^{1/2}$, and given values $y_1,\ldots,y_n$
we define $\widehat Q(\beta) = \En[(y_\ii-x_\ii'\beta)^2]$.
We use the notation $a \lesssim b$ to denote $a \leq C b$ for some
constant $C>0$ that does not depend on $n$ (and, therefore, does not
depend on quantities indexed by $n$ like $p$ or $s$); and $a\lesssim_P
b$ to denote $a=O_P(b)$. $\Phi$ denotes the cumulative distribution of
a standard Gaussian distribution and $\Phi^{-1}$ its inverse function.

%s2 #&#
\section{Setting and estimators}\label{Sec:Model}

Consider the nonparametric regression model:
%
%e2.1 #&#
%
\begin{eqnarray}
\label{Def:NP} y_i &=& f(z_i) + \sigma
\varepsilon_i,\qquad \varepsilon_i \sim F_i,
\nonumber
\\[-8pt]
\\[-8pt]
\nonumber
\Ep[
\varepsilon_i]& =&0,\qquad i=1,\ldots,n,\qquad \barEp \bigl[\varepsilon_\ii^2
\bigr] = 1,
\end{eqnarray}
where $z_i$ are vectors of fixed regressors, $\varepsilon_i$ are
independent errors, and $\sigma$ is the scaling factor of the errors.
In order to recover the regression function $f$, we consider linear
combinations of the covariates $x_{i}=P(z_i)$ which are $p$-vectors of
transformation of $z_i$ normalized so that $\En[x_{\ii j}^2]=1$
($j=1,\ldots,p$).

The goal is to estimate the value of the nonparametric regression
function $f$ at the design points, namely the values $(f_i)_{i=1}^n:=
(f(z_i))_{i=1}^n$. In the nonparametric settings, the regression
functions $f$ are generically nonsparse. However, often they can be
well approximated by a sparse model $x'\beta_0$. One way to find such
approximating model is to let $\beta_0$ be a solution of the following
risk minimization
problem:
%
%e2.2 #&#
%
\begin{equation}
\label{oracle} \min_{\beta\in\RR^p} \En \bigl[ \bigl(f_\ii-
x_\ii'\beta \bigr)^2 \bigr] +
\frac
{\sigma^2\|
\beta\|_0}{n}.
\end{equation}
The problem (\ref{oracle}) yields the so called oracle risk---an upper
bound on the risk of the best $k$-sparse
least squares estimator in the case of homoscedastic Gaussian errors,
that is, the best estimator among all least squares estimators that use
$k$ out of $p$ components of $x_i$ to estimate $f_i$. The solution
$\beta_0$ achieves a balance between the mean square of the
approximation error $r_i:=f_i-x_i'\beta_0$ and the variance, where the
latter is determined by the complexity $\| \beta_0\|_0$ of the model
(number of nonzero components of $\beta_0$).

In what follows, we call $\beta_0$ the target parameter value, $T:=
\supp(\beta_0)$ the oracle model, $s:= |T| = \| \beta_0\|_0$ the
dimension of the oracle model, and $x_i'\beta_0$ the oracle or the best
sparse approximation to $f_i$. We note that $T$ is generally unknown.
We summarize the preceding discussion as follows.

\renewcommand{\thecond}{ASM}
\begin{cond}\label{condASM}
We have data $\{(y_i,z_i) \dvtx
i=1,\ldots,n\}$ that for each $n$ obey the regression model (\ref
{Def:NP}), where $y_i$ are the outcomes, $z_i$ are vectors of fixed
basic covariates, the regressors $x_i:=P(z_i)$ are transformations of
$z_i$, and $\varepsilon_i$ are i.n.i.d. errors. The vector
$\beta_0$ is defined by (\ref{oracle}) where\vadjust{\goodbreak} the regressors $x_i$ are
normalized so that $\En[x_{\ii j}^2]=1$, $j=1,\ldots,p$. We let
%
%e2.3 #&#
%
\begin{equation}
\label{Def:Many}\qquad T:= \supp(\beta_0),\qquad s:=|T|,\qquad r_i:=
f_i-x_i'\beta_0 \quad\mbox{and}\quad
c_s^2:= \En \bigl[r_\ii^2
\bigr].
\end{equation}
\end{cond}
%
%re1 #&#
%
\begin{remark}[(Targeting $x_i'\beta_0$ is the same as targeting
$f_i$'s)] We focus
on estimating the oracle model $x_i'\beta_0$ using estimators of the
form $x_i'\hat\beta$,
and we seek to bound estimation errors with respect to the prediction norm
$\|\hat\beta-\beta_0\|_{2,n}:= \{\En[(x_\ii'\beta_0-x_\ii'\hat
\beta
)^2]\}^{1/2}$.
The bounds on estimation errors for the ultimate target $f_i$ then
follow from the triangle inequality, namely
%
%e2.4 #&#
%
\begin{equation}
\label{Def:Triangle}\sqrt{\En \bigl[ \bigl(f_\ii-x_\ii'
\hat\beta \bigr)^2 \bigr]} \leq\|\hat\beta-\beta_0
\|_{2,n} + c_s.
\end{equation}
\end{remark}

%re2 #&#
%
\begin{remark}[(Bounds on the approximation error)]\label
{Remark:ApproximationError} The approximation errors
typically satisfy $c_s \leq K \sigma\sqrt{(s \vee1)/n}$ for some fixed
constant $K$, since the optimization problem (\ref{oracle}) balances
the (squared) norm of
the approximation error (the norm of the bias) and the variance; see
\cite{TsybakovBook,BC-LectureNotes,BC-PostLASSO}.
In particular, this condition holds for wide classes of functions; see
Example \ref{exS} of Section~\ref{cec:primitive} dealing with Sobolev classes
and Section C.2 %\ref{Sec:BenchmarkOracle}
of supplementary material \cite{BCWsupp}.
\end{remark}

%s2.1 #&#
\subsection{Heteroscedastic \texorpdfstring{$\LASSO$}{square-root Lasso}}

In this section, we formally define the estimators which are tailored
to deal with heteroscedasticity.

We propose to define the $\LASSO$ estimator as
%
%e2.5 #&#
%
\begin{equation}
\label{Def:LASSOmod} \hat\beta\in\arg\min_{\beta\in\RR^p} \sqrt{\widehat Q(
\beta)} + \frac{\lambda}{n} \|\Gamma\beta\|_1,
\end{equation}
where $\widehat Q(\beta) = \En[(y_\ii-x_\ii'\beta)^2]$, $\Gamma
=\diag
(\gamma_1,\ldots,\gamma_p)$ is a diagonal matrix of penalty loadings.
The scaled $\ell_1$-penalty allows component specific adjustments to
more efficiently deal with heteroscedasticity.\footnote{When errors are
homoscedastic, we can set $\Gamma= I_p$. In the heteroscedastic case, using
$\Gamma=I_p$ may require setting $\lambda$ too conservatively, leading
to over-penalization and worse performance bounds. In the paper, we
develop data-dependent choice of $\Gamma$ that allows us to avoid
over-penalization thereby improving the performance.} Throughout, we
assume $\gamma_{j} \geq1$ for $j=1,\ldots,p$.

In order to reduce the shrinkage\vspace*{1pt} bias of $\LASSO$, we consider the post
model selection estimator that applies ordinary least squares (ols) to
a model $\widehat T$ that contains the model selected by $\LASSO$.
Formally, let $\widehat T$ be such that
\[
\supp( \hat\beta) = \bigl\{ j \in\{1,\ldots,p\} \dvtx|\hat\beta_j|
> 0 \bigr\} \subseteq\widehat T,
\]
and define the ols post $\LASSO$ estimator $\widetilde\beta$
associated with $\widehat T$ as
%
%e2.6 #&#
%
\begin{equation}
\label{Def:TwoStep} \widetilde\beta\in\arg\min_{\beta\in\mathbb
{R}^p} \sqrt{\widehat
Q(\beta)} \dvtx\beta_j = 0\qquad \mbox{if } j \notin\widehat T.
\end{equation}
A sensible choice for $\widehat T$ is simply to set $\widehat T=\supp(
\hat\beta)$. Moreover, we allow for additional components
(potentially selected through an arbitrary data-dependent procedure) to
be added, which is relevant for practice.

%s2.2 #&#
\subsection{Typical conditions on the Gram matrix}\label{Sec:CondMatrix}
The Gram matrix $\En[x_\ii x_\ii']$ plays an important role in
the analysis of estimators in this setup. When $p>n$, the
smallest eigenvalue of the Gram matrix is $0$, which creates
identification problems. Thus, to restore identification, one
needs to restrict the type of deviation vectors $\delta$
corresponding to the potential deviations of the estimator from
the target value $\beta_0$. Because of the $\ell_1$-norm
regularization, the following restricted set is important:
\[
\Delta_{\cc} = \bigl\{\delta\in\RR^p\dvtx\|\Gamma
\delta_{T^c}\|_{1} \leq\cc\|\Gamma\delta_{T}
\|_{1}, \delta\neq0 \bigr\}\qquad \mbox{for } \cc\geq1.
\]
The restricted eigenvalue
$\kappa_{\cc}$ of the Gram matrix $\En[x_\ii x_\ii']$ is defined
as
%
%e2.7 #&#
%
\begin{equation}
\label{RE} \kappa_{\cc}: = \min_{\delta\in\Delta_{\cc}}
\frac{\sqrt{s}\|
\delta
\|_{2,n}}{\|\Gamma\delta_T\|_1 }.% \ \ \ \mbox{and} \ \ \ \tilde
\end{equation}
The restricted eigenvalues can depend on $n$, $T$, and $\Gamma$, but we
suppress the dependence in our notation. The restricted
eigenvalues (\ref{RE}) are variants of the restricted eigenvalue
introduced in \cite{BickelRitovTsybakov2009} and of the compatibility
condition in \cite{GeerBuhlmann2009}
that accommodate the penalty loadings $\Gamma$. They were proven to be
useful for many designs of interest
specially for establishing $\ell_k$-norm rates. Below we suggest their
generalizations that are useful for deriving rates in prediction norm.

The minimal and maximal $m$-sparse eigenvalues of a matrix $M$,
%
%e2.8 #&#
%
\begin{eqnarray}
\label{SEdef} \phi_{\mathrm{min}}(m,M)&:=& \min_{\|\delta_{T^c}\|
_{0} \leq m,
\delta\neq0
}
\frac{ \delta'M\delta}{\|\delta\|^2},
\nonumber
\\[-8pt]
\\[-8pt]
\nonumber
\phi_{\mathrm{max}}(m,M)&:=& \max_{\|
\delta_{T^c}\|_{0} \leq m, \delta\neq0
}
\frac{ \delta'M\delta}{\|\delta\|^2}.
\end{eqnarray}
Typically, we consider $M=\En[x_\ii x_\ii']$ or $M=\Gamma^{-1}\En
[x_\ii x_\ii']\Gamma^{-1}$. When $M$ is not specified, we mean $M=\En
[x_\ii x_\ii']$, that is, $\phi_{\mathrm{min}}(m)=\phi_{\mathrm
{min}}(m,\En[x_\ii x_\ii'])$ and
$\phi_{\mathrm{max}}(m)=\phi_{\mathrm{max}}(m,\En[x_\ii x_\ii'])$.
These quantities play an important role in the sparsity and post model
selection analysis. Moreover, sparse
eigenvalues provide a simple sufficient condition to bound
restricted eigenvalues; see \cite{BickelRitovTsybakov2009}.

%s3 #&#
\section{Finite-sample analysis of \texorpdfstring{$\LASSO$}{square-root Lasso}}\label{sec:finite-result}

Next, we establish several finite-sample results regarding the
$\LASSO$ estimator. Importantly, these results are based on new impact
factors that can be very well behaved under repeated (i.e., collinear)
regressors, and which strictly generalize the restricted eigenvalue
(\ref{RE}) and compatibility constants.

The following event plays a central role in the analysis:
%
%e3.1 #&#
%
\begin{equation}
\label{RegEvent}\quad \lambda/n \geq c \bigl\|\Gamma^{-1}\widetilde S
\bigr\|_\infty \qquad\mbox{where } \widetilde S:= \En \bigl[x_\ii(\sigma
\varepsilon_\ii+r_\ii) \bigr]/\sqrt{\En \bigl[(\sigma
\varepsilon_\ii+ r_\ii)^2 \bigr]}
\end{equation}
%
%$$\frac{\lambda}{n} \geq c \|\Gamma^{-1}\widetilde S\|_\infty, \ \
is the score of $\widehat Q^{1/2}$ at $\beta_0$ ($\widetilde S = 0$ if
$\En[(\sigma\varepsilon_\ii+ r_\ii)^2]=0$). Throughout the
section, we
assume such event holds. Later we provide choices of
$\lambda$ and $\Gamma$ based on primitive conditions such that the
event in (\ref{RegEvent}) holds with a high probability. %

%%%%%%%%%%%%%%%
%%%%%%%%%%%%%%%

%s3.1 #&#
\subsection{New noise and design impact factors}\label{Sec:NewIden}
We define the following \textit{noise} and \textit{design} impact
factors for a constant $c>1$:
%
%e3.2 #&#
%e3.3 #&#
%
\begin{eqnarray}
\label{Def:rho} \varrho_{c} &:= & \sup_{\tiny\|\delta\|_{2,n}>0,
\delta\in R_c }
\frac{|\widetilde S'\delta|}{ \|\delta\|_{2,n}},
\\
\label{Def:kk} \kk&:= & \inf_{\|\Gamma(\beta_0 + \delta) \|_{1}
< \|
\Gamma\beta_0\|_{1} } \frac{\sqrt{s}\|\delta\|_{2,n}}{ \|\Gamma
\beta
_0\|_{1} - \|\Gamma(\beta_0 + \delta) \|_{1}},
\end{eqnarray}
where $R_c:=\{ \delta\in\RR^p\dvtx\|\Gamma\delta\|_{1} \geq c
( \|
\Gamma(\beta_0 + \delta) \|_{1} - \|\Gamma\beta_0\|_{1} )\}$. For the
case $\beta_0=0$, we define $\varrho_c = 0$ and $\bar\kappa= \infty$.
These quantities depend on $n$, $\beta_0$ and $\Gamma$, albeit we
suppress this when convenient.

An analysis based on the quantities $\varrho_{c}$ and $\kk$ will be
more general than the one relying only on restricted eigenvalues (\ref
{RE}). This follows because
(\ref{RE}) yields one possible way to bound both $\kk$ and $\varrho
_{c}$, namely,
\begin{eqnarray*}
\kk\geq\lkk&:= &\inf_{{\delta\in{\rm int}(\Delta_1)}}
\frac{\sqrt{s}\|\delta\|_{2,n}}{ \|\Gamma\delta_T\|_{1 }-\|\Gamma
\delta
_{T^c}\|_{1
}} \geq\min_{\delta\in\Delta_{1}} \frac{\sqrt{s}\|\delta\|
_{2,n}}{\|
\Gamma\delta_T\|_1} \geq\min
_{\delta\in\Delta_{\cc}} \frac
{\sqrt{s}\|\delta\|_{2,n}}{\|\Gamma\delta_T\|_1} = \kappa_{\cc},
\\
\varrho_{c} &\leq&\sup_{\delta\in\Delta_{\cc}}\frac{\|\Gamma
^{-1}\widetilde S\|_{\infty}\|\Gamma\delta\|_1}{\|\delta\|_{2,n}} \leq
\sup_{\delta\in\Delta_{\cc}} \frac{\|\Gamma^{-1}\widetilde S\|
_{\infty
}(1+\cc)\|\Gamma\delta_T\|_{1}}{\|\delta\|_{2,n}}\\
& \leq&\frac
{(1+\cc
)\sqrt{s}}{\kappa_{\cc}}\|
\Gamma^{-1}\widetilde S\|_{\infty},
\end{eqnarray*}
where $c>1$ and $\cc:=(c+1)/(c-1)>1$. The quantities $\kk$ and
$\varrho
_{c}$ can be well behaved (i.e., $\kk>0$ and $\varrho_{c} < \infty$)
even in the presence of repeated (i.e., collinear) regressors (see
Remark~\ref{re4} for a simple example), while restricted eigenvalues and
compatibility constants would be zero in that case.

%A few comments are in order with respect to (\ref{Def:rho}).
The design impact factor $\kk$ in (\ref{Def:kk}) strictly generalizes
the original restricted eigenvalue (\ref{RE})
proposed in \cite{BickelRitovTsybakov2009} and the compatibility
constants proposed in
\cite{GeerBuhlmann2009} and in \cite{Geer2007}.\hskip.2pt\footnote{The compatibility
condition defined in \cite{Geer2007} is defined as: $\exists\nu(T)>0$
such that
\[
\inf_{\delta\in\Delta_3}\frac{\sqrt{s}\|\delta\|_{2,n}}{(1+\nu
(T))\|
\Gamma\delta_T\|_1-\|\Gamma\delta_{T^c}\|_1}>0.
\]
We
have that $\kk\geq\lkk$, where $\lkk$ corresponds to setting $\nu
(T)=0$ and using $\Delta_1$ in place of $\Delta_3$, which
strictly weakens \cite{Geer2007}'s definition. Allowing for $\nu(T)=0$
is necessary for allowing collinear regressors.}
The design conditions based on these concepts are relatively weak, and
hence (\ref{Def:kk}) is a useful concept.

The noise impact factor $\varrho_{c}$ also plays an important role. It
depends on the noise, design and
approximation errors, and can be controlled via empirical process
methods. Note that under (\ref{RegEvent}),
the deviation $\hat\delta=\hat\beta-\beta_0$ of the $\LASSO$ estimator
from $\beta_0$ obeys $\hat\delta\in R_c$, explaining its appearance
in the definition of $\varrho_c$. The lemmas below summarize the above
discussion.

%le1 #&#
%
\begin{lemma}[(Bounds on and invariance of design impact factor)]\label
{Lemma:NewKappa}
Under Condition \ref{condASM}, we have $\kk\geq\lkk\geq\kappa_1\geq\kappa
_{\cc}$. Moreover, if copies of regressors are included with the same
corresponding penalty loadings, the lower bound $\lkk$ on $\kk$ does
not change.
\end{lemma}

%le2 #&#
%
\begin{lemma}[(Bounds on and invariance of noise impact factor)]\label
{Lemma:NewRho}
Under Condition \ref{condASM}, we have $\varrho_{c} \leq(1+\cc) \sqrt{s} \|
\Gamma
^{-1}\widetilde S\|_\infty/ \kappa_{\cc}$. Moreover, if copies of
regressors with indices $j \in T^c$ are included with the same
corresponding penalty loadings,
$\varrho_{c}$ does not change (see also Remark~\ref{re4}).
\end{lemma}

%Assume that for some $c>1$ we have $\lambda/n \geq c \|\Gamma^{-1}
% \end{equation}

%le3 #&#
%
\begin{lemma}[(Estimators belong to restricted sets)]\label{Lemma:Domination}
Assume that for some $c > 1$ we have $\lambda/n \geq c \|\Gamma
^{-1}\widetilde S\|_{\infty}$, then $\hat\delta\in R_c$.
The latter condition implies that $\hat\delta\in\Delta_{\bar c}$ for
$\bar c = (c+1)/(c-1)$.
\end{lemma}

%A potential interesting impact factor which is tailored for an
%analysis based on first order conditions is
% $$ \upsilon_{\cc}:= \min_{\delta\in\Delta_{\cc}} \frac{\|\En[x_\ii
%x_\ii']\delta\|_\infty}{\|\delta\|_{2,n}} $$ which will also be
%invariant if repeated regressors are included.

%naturally extend to Lasso, other $\ell_1$-penalized estimators, and
%also other penalized $M$-estimators. A natural generalization of $
%S$ with $\nabla\widehat Q(\beta_0)$ in its definition. \end{remark}

%s3.2 #&#
\subsection{Finite-sample bounds on \texorpdfstring{$\LASSO$}{square-root Lasso}}\label{Sec:Prediction}

%We begin with a finite-sample bound for the prediction norm which
%is similar to the bound obtained by the Lasso estimator that knows
%$\sigma$.

In this section, we derive finite-sample bounds for the prediction norm
of the $\LASSO$ estimator. These bounds are established under
heteroscedasticity, without knowledge of the scaling parameter $\sigma
$, and using the impact factors proposed in Section~\ref{Sec:NewIden}.
For $c>1$, let $\cc=(c+1)/(c-1)$ and consider the conditions
%
%e3.4 #&#
%
\begin{equation}
\label{MainFSCondition} \lambda/n \geq c \bigl\|\Gamma^{-1}\widetilde S
\bigr\|_\infty\quad\mbox{and}\quad \bbb:= \lambda\sqrt{s}/(n\kk)<1.
\end{equation}

%th1 #&#
%
\begin{theorem}[(Finite sample bounds on estimation error)]\label
{Thm:NewRateSquareRootLASSONonparametric}
Under Condition \ref{condASM} and (\ref{MainFSCondition}), we have
\[
\|\hat\beta- \beta_0\|_{2,n} \leq2\sqrt{\hat Q(
\beta_0)} \BB,\qquad \BB:= \frac{\varrho_{c} + \bbb}{1-\bbb^2}.
\]
\end{theorem}

We recall that the
choice of $\lambda$ does not depend on the scaling parameter
$\sigma$. The impact of $\sigma$ in the bound of Theorem~\ref
{Thm:NewRateSquareRootLASSONonparametric} comes through
the factor
$\hat Q^{1/2}(\beta_0) \leq\sigma\sqrt{\En[\varepsilon_\ii^2]} + c_s$
where $c_s$ is the size of the approximation error defined in
Condition~\ref{condASM}.
Moreover, under typical conditions that imply $\kappa_{\cc}$ to be
bounded away from zero, for example,
under Condition \ref{condP} of Section~\ref{cec:primitive} and standard choice of
penalty, we have with a high probability
\[
\BB\lesssim\sqrt{\frac{ s \log(p \vee n) }{n}}\quad \implies\quad\|\hat \beta- \beta_0
\|_{2,n} \lesssim\sigma\sqrt{\frac{ s \log(p \vee
n) }{n}}.
\]
%
%so that a usual rate is expected.
Thus, Theorem~\ref{Thm:NewRateSquareRootLASSONonparametric} generally
leads to the same rate of convergence as in the
case of the Lasso estimator that knows $\sigma$ since
$\En[\varepsilon_\ii^2]$ concentrates around 1 under (\ref{Def:NP})
and provided a law of large numbers holds. We derive performance bounds
for other
norms of interest in the supplementary material~\cite{BCWsupp}.\vspace*{1pt}

%The current proofs for $\LASSO$ raises issues
%different from those of Lasso. In
%particular, we need to invoke the additional growth restriction
%$\bbb< 1$, which is the price of not knowing $\sigma$. In the
%supplemental material we also establish $\ell_1$ and $\ell_

%
%which is not present in the
%Lasso analysis that treats $\sigma$ as known. This is required
%because the introduction of the square-root removes the quadratic
%growth which would eventually dominates the $\ell_1$ penalty for
%large enough deviations from $\beta_0$. This condition ensures
%that the penalty its not too large so identification of $\beta_0$
%is still possible. Note however that when this side condition fails and
%$\sigma$ is bounded away from zero, Lasso is not guaranteed to be
%consistent since its rate of convergence is typically given by
%$\sigma\lambda\sqrt{s}/[n\kappa_{\cc}]$.
%
%Also, the event $\lambda/n\geq c\|\Gamma^{-1}\widetilde S\|_\infty$
%accounts for the
%approximation errors $r_1,\ldots, r_n$. It has two implications.
%First, the impact of $c_s$ on the estimation of $\beta_0$ is
%diminished by a factor of $(\varrho_{c} + \bbb)/(1-\bbb^2)$.
%Second, despite of the approximation errors, we have $\widehat
%analysis that relied on $\lambda/n \geq
%c\|\En[\varepsilon_ix_i]\|_\infty$ instead, see
%

The next result deals with $\widehat
Q(\hat\beta)$ as an estimator for $\widehat Q(\beta_0)$ and $\sigma^2$.

%th2 #&#
%
\begin{theorem}[(Estimation of $\sigma$)]\label{Thm:Q}
Under Condition \ref{condASM} and (\ref{MainFSCondition})
\[
-2\varrho_{c}\sqrt{\widehat Q(
\beta_0)}\BB\leq\sqrt{\widehat Q(\widehat\beta)} - \sqrt {\widehat
Q(\beta_0)}  \leq2\bbb\sqrt{\widehat Q(\beta_0)}\BB.
\]
Under only Condition \ref{condASM}, we have
\[
\bigl\vert\sqrt{\widehat Q(\widehat\beta)}- \sigma
\bigr\vert \leq\| \widehat\beta-\beta_0\|_{2,n} +
c_s + \sigma\bigl|\En \bigl[\varepsilon_\ii^2
\bigr] - 1\bigr|.
\]
\end{theorem}
We note that further bounds on $|\En[\varepsilon_\ii^2] - 1|$ are implied
by von Bahr--Esseen's and Markov's inequalities, or by self-normalized
moderate deviation (SNMD) theory as in Lemma~\ref{Lemma: MDSN}. As a
result, the theorem implies consistency $| \widehat Q^{1/2}(\widehat
\beta)- \sigma| =o_P(1)$ under mild moment conditions; see
Section~\ref{cec:primitive}. Theorem~\ref{Thm:Q} is also useful for
establishing
the following sparsity properties.

%$$ \sqrt{\widehat Q(\widehat\beta)} = (1+o_P(1))\sqrt{\widehat Q(
%provided $\widehat Q^{1/2}(\beta_0)=\sigma(1+o_P(1))$. Thus $\widehat
%Q^{1/2}(\widehat\beta)$ can be used to as an consistent estimate for $
%o_P(n^{-1/4})$, we have that $$\sqrt{n}(\widehat Q^{1/2}(\widehat
%share the same first-order asymptotic distribution, see Theorem~

%Next we bound the size of the selected support $\supp(\widehat\beta)$
%for the $\LASSO$ estimator relative to the size $s$ of the %support of
%the oracle estimator $\beta_0$.

%th3 #&#
%
\begin{theorem}[(Sparsity bound for $\LASSO$)]\label{Thm:Sparsity}
Suppose Condition \ref{condASM}, (\ref{MainFSCondition}), $\widehat Q(\beta
_0)>0$, and $2\varrho_{c} \BB\leq1/(c\cc)$. Then we have
\[
\bigl|\supp(\widehat\beta)\bigr| \leq s \cdot4\cc^2 (\BB/\bbb
\kk)^2 \min_{m\in
\mathcal{M}}\phi_{\mathrm{max}}\bigl(m,
\Gamma^{-1}\En \bigl[x_\ii x_\ii'
\bigr] \Gamma^{-1}\bigr),
\]
where $\mathcal{M}=\{ m \in\mathbb{N}\dvtx m > s \phi_{\mathrm
{max}}(m, \Gamma ^{-1}\En[x_\ii x_\ii']\Gamma^{-1})\cdot8\cc^2
(\BB/(\bbb\kk))^2 \}$.
Moreover, if $\kappa_{\cc}>0$ and $\bbb<1/\sqrt{2}$ we have
\[
\bigl|\supp(\widehat\beta)\bigr| \leq s \cdot \bigl( 4\cc^2 /
\kappa_{\cc} \bigr)^2 \min_{m \in\mathcal{M}^*}
\phi_{\mathrm
{max}}\bigl(m, \Gamma^{-1}\En \bigl[x_\ii
x_\ii' \bigr]\Gamma^{-1}\bigr),
\]
where $\mathcal{M}^*=\{ m \in\mathbb{N}\dvtx m > s \phi_{\mathrm
{max}}(m, \Gamma ^{-1}\En[x_\ii x_\ii']\Gamma^{-1})\cdot2 (
4\cc
^2 / \kappa_{\cc})^2 \}$.
\end{theorem}

%{\bf\color{red} When will $\hat m$ go to 0?}
%
%re3 #&#
%
\begin{remark}[(On the sparsity bound)] Section~\ref{cec:primitive} will
show that under minimal
and maximal sparse eigenvalues of order $s \log n$ bounded away from zero
and from above, Theorem \ref{Thm:Sparsity} implies that with a high
probability
\[
\bigl|\supp(\widehat\beta)\bigr| \lesssim s:= \bigl|\supp(\beta_0)\bigr|.\vadjust{\goodbreak}
\]
That is, the selected model's size will be of the same order as the
size of the oracle model.
We note, however, that the former condition is merely a sufficient
condition. The bound $|\supp(\widehat\beta)| \lesssim s$
will apply for other designs of interest. This can be the case even if
$\kappa_{\cc}=0$ (e.g.,
in the aforementioned design, if we change it by adding a single
repeated regressor).
\end{remark}

%re4 #&#
%
\begin{remark}[(Maximum sparse eigenvalue and sparsity)]\label{re4} Consider the
case of $f(z) = z$ with $p$ repeated regressors $x_i = (z_i,\ldots
,z_i)'$ where $|z|\leq K$. In this case, one could set $\Gamma= I\cdot
K$. In this setting, there is a sparse solution for $\LASSO$, but there
is also a solution which has all $p$ nonzero coefficients. Nonetheless,
the bound for the prediction error rate will be well behaved since $\kk
$ and $\bbb$ are invariant to the addition of copies of $z$ and
\[
\kk\geq1/K \quad\mbox{and}\quad\varrho_{c} =\bigl|\En[\varepsilon_\ii
z_\ii]\bigr|/ \bigl\{ \En \bigl[\varepsilon_\ii^2
\bigr]\En \bigl[z_\ii^2 \bigr] \bigr\}^{1/2}
\lesssim_P 1/\sqrt{n}
\]
under mild moment conditions on the noise (e.g., $\barEp[|\varepsilon
_\ii
|^3]\leq C$). In this case, $\phi_{\mathrm{max}}(m,\Gamma^{-1}\En
[x_\ii x_\ii ']\Gamma ^{-1}) = (m+1)\En[z_\ii^2]/K^2$ and the set
$\mathcal{M}$ only contains
integers larger than $p$, leading to the trivial bound $\hat m \leq p$.
%Note that in this case it is possible to use the goodness-of-fit based
%thresholding procedure of \cite{BC-PostLASSO} to get the model of the
%right sparsity.
\end{remark}

%s3.3 #&#
\subsection{Finite-sample bounds on ols post \texorpdfstring{$\LASSO$}{square-root Lasso}}\label{Sec:PostModel}
Next, we consider the ols estimator applied to the model $\widehat T$
that was selected by $\LASSO$ or includes such model (plus other
components that the data analyst may wish to include), namely $
\supp(\widehat\beta)\subseteq\widehat T$. We are interested in the case
when model selection does not work perfectly, as occurs in applications.

The following result establishes performance bounds for the ols post
$\LASSO$ estimator. Following \cite{BC-PostLASSO}, the analysis
accounts for the
data-driven choice of components and for the possibly
misspecified selected model (i.e., $ T \nsubseteq\widehat T$).

%th4 #&#
%
\begin{theorem}[(Performance of ols post $\LASSO$)]\label
{Thm:2StepNonparametric}
Under Condition \ref{condASM} and~(\ref{MainFSCondition}), let $\supp(\widehat
\beta)\subseteq\widehat T$, and $\widehat m =
|\widehat T\setminus T|$. Then we have that the ols post $\LASSO$
estimator based on $\widehat T$ satisfies
%$$ \max_{l\leq k_e} \| D_{il} - f_i'\hat\beta_{lPL}\|_{2,n}
%
\[
\|\widetilde\beta- \beta_0\|_{2,n} \leq\frac{\sigma\sqrt{s+\hat m
}\|\En[x_\ii\varepsilon_\ii]\|_\infty}{\sqrt{\phi_{\mathrm
{min}}(\hat m)}}
+ 2c_s + 2\sqrt{\hat Q(\beta_0)}\BB.
\]
%
%$$ \max_{l\leq k_e} \|(I- \mathcal{P}_{\hat T_l})D_l/\sqrt{n}\|_2 \leq
%k_e} (u + \frac{1}{c}) \frac{\lambda\sqrt{s}}{n
\end{theorem}

The result is derived from the sparsity of the model $\widehat T$ and
from its approximating ability. Note the presence of the new term $\|
\En
[x_\ii\varepsilon_\ii]\|_\infty$. Bounds on $\|{\En}[x_{\ii
}\varepsilon
_\ii
]\|_\infty$ can be derived using the same tools used to justify the
penalty level $\lambda$, via moderate deviation theory for
self-normalized sums \cite{jing:etal}, Gaussian approximations
to empirical processes \cite{CCK1,CCK2} or empirical process
inequalities as in \cite{BC-SparseQR}. Under mild conditions, we have
$\|\En[x_\ii\varepsilon_\ii]\|_\infty\leq C\sqrt{\log(pn) / n}$ with
probability $1-o(1)$.

%s3.4 #&#
\subsection{Two extreme cases}\textit{Case \textup{(i):} Parametric noiseless
case.} Consider the case that $\sigma=0$ and $c_s = 0$. Therefore, the
regression function is exactly sparse, $f(z_i)=x_i'\beta_0$. In this
case, $\LASSO$ can exactly recover the $f$ and even $\beta_0$ under
weak conditions under a broad range of penalty levels.

%th5 #&#
%
\begin{theorem}[(Exact recovery for the parametric noiseless case)]\label
{Thm:Perfect}
Under Condition \ref{condASM}, let $\sigma= 0$ and $c_s=0$. Suppose that
$\lambda
>0$ obeys the growth restriction $\bbb=\lambda\sqrt{s}/[n\kk]<1$. Then
we have $\|\hat\beta- \beta_0\|_{2,n} =0$, and if, moreover, $\kappa
_1>0$, then $\hat\beta= \beta_0$.
\end{theorem}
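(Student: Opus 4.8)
The plan is to argue directly from the optimality of $\hat\beta$, bypassing the score $\widetilde S$ entirely: in this regime $\widetilde S = \En[x_i(\sigma\epsilon_i + r_i)]/\sqrt{\En[(\sigma\epsilon_i+r_i)^2]}$ is an indeterminate $0/0$ and plays no role, which is precisely why the hypothesis asks only for $\lambda\sqrt{s} < n\kk$ rather than the usual domination event. First I would note that $\sigma = 0$ together with $c_s = 0$ forces $r_i = 0$ for every $i$, so that $y_i = x_i'\beta_0$ exactly and hence $\widehat Q(\beta_0) = \En[r_i^2] = 0$. Writing $\delta := \hat\beta - \beta_0$, the same identity gives $\widehat Q(\hat\beta) = \En[(x_i'\delta)^2] = \|\delta\|_{2,n}^2$, so $\sqrt{\widehat Q(\hat\beta)} = \|\delta\|_{2,n}$.

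Next, since the objective value at $\beta_0$ is simply $(\lambda/n)\|\Gamma\beta_0\|_1$ (because $\sqrt{\widehat Q(\beta_0)} = 0$), optimality of $\hat\beta$ yields
$$\|\delta\|_{2,n} \leq \frac{\lambda}{n}\left(\|\Gamma\beta_0\|_1 - \|\Gamma\hat\beta\|_1\right).$$
Splitting over $T = \supp(\beta_0)$ and $T^c$ via the reverse triangle inequality, and using that $\beta_0$ is supported on $T$, I would bound the right-hand side by $(\lambda/n)(\|\Gamma\delta_T\|_1 - \|\Gamma\delta_{T^c}\|_1)$. Now split into cases: if $\|\Gamma\delta_{T^c}\|_1 \geq \|\Gamma\delta_T\|_1$ the bound is nonpositive and $\|\delta\|_{2,n} = 0$ is immediate; otherwise $\delta$ lies in the cone defining $\kk$ in (\ref{Def:rho}), so that $\|\Gamma\delta_T\|_1 - \|\Gamma\delta_{T^c}\|_1 \leq \sqrt{s}\,\|\delta\|_{2,n}/\kk$ (here $\kk > 0$ is forced by $\lambda\sqrt{s} < n\kk$). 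Substituting gives $\|\delta\|_{2,n} \leq \bar\rho\,\|\delta\|_{2,n}$ with $\bar\rho = \lambda\sqrt{s}/[n\kk] < 1$, which again forces $\|\delta\|_{2,n} = 0$. Either way $\|\hat\beta - \beta_0\|_{2,n} = 0$.

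For the refinement, with $\|\delta\|_{2,n} = 0$ the displayed inequality shows $\|\Gamma\delta_{T^c}\|_1 \leq \|\Gamma\delta_T\|_1$, so either $\delta = 0$ or $\delta \in \Delta_1$. In the latter case every element of $\Delta_1$ has $\|\Gamma\delta_T\|_1 > 0$ (otherwise $\delta_T = 0$ and the cone constraint forces $\delta = 0$), so the definition of $\kappa_1$ would give $\|\delta\|_{2,n} \geq \kappa_1\|\Gamma\delta_T\|_1/\sqrt{s} > 0$, contradicting $\|\delta\|_{2,n} = 0$. Hence $\delta = 0$, i.e. $\hat\beta = \beta_0$. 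There is no serious obstacle here; the one point to stay disciplined about is that the earlier rate theorems all presuppose the domination event $\lambda/n \geq c\|\Gamma^{-1}\widetilde S\|_\infty$, which is meaningless when $\widehat Q(\beta_0) = 0$, so the cone argument must be rerun from scratch rather than quoted.
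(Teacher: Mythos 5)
Your proof is correct and follows essentially the same route as the paper's: use optimality of $\hat\beta$ with $\sqrt{\widehat Q(\beta_0)}=0$ to get $\|\delta\|_{2,n}\leq(\lambda/n)(\|\Gamma\delta_T\|_1-\|\Gamma\delta_{T^c}\|_1)$, invoke $\kk$ and $\bar\rho<1$ to force $\|\delta\|_{2,n}=0$, then invoke $\kappa_1>0$ on the cone $\Delta_1$ to conclude $\delta=0$. Your explicit case split on the sign of $\|\Gamma\delta_T\|_1-\|\Gamma\delta_{T^c}\|_1$ and the remark that the domination event is vacuous here are slightly more careful renderings of steps the paper leaves implicit, not a different argument.
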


%Thus, a Theorem~\ref{Thm:Perfect} provides a sufficient condition for
%exact recovery of the regression function in the noiseless case based
%on the design impact factor $\kk$. If, further, the restricted
%eigenvalue $\kappa_1$ is nonzero, the $\LASSO$ estimator will
%perfectly recover $\beta_0$under a wide range of penalty levels.

%re5 #&#
%
\begin{remark}[(Perfect recovery and Lasso)] It is worth mentioning that
for any $\lambda> 0$, unless
$\beta_0=0$, Lasso cannot achieve exact recovery. Moreover, it is
not obvious how to properly set the penalty level for Lasso even if we
knew a priori that it is a parametric noiseless model. In
contrast, $\LASSO$ can automatically adapt to the noiseless case.
%%Under mild conditions Theorem~\ref{Thm:Q} ensures that %$\widehat
%Q(\hat\beta) = \widehat Q(\beta_0)=0$ which allows for the perfect
%recovery.
\end{remark}

\textit{Case \textup{(ii):} Nonparametric infinite variance}. We conclude this
section with the infinite variance case. The finite sample theory does
not rely on $\Ep[\varepsilon_\ii^2]< \infty$. Instead it relies on the
choice of penalty level and penalty loadings to satisfy $\lambda/n
\geq
c\|\Gamma^{-1}\widetilde S\|_\infty$. Under symmetric errors, we
exploit the sub-Gaussian property of
self-normalized sums \cite{delapena} to develop a choice of penalty
level $\lambda$ and loadings $\Gamma= \operatorname{diag}(\gamma_j,
j=1,\ldots,p)$, where
%
%e3.5 #&#
%
\begin{equation}
\label{PenaltyUnbounded}\lambda= (1+u_n)c\sqrt{n} \bigl\{ 1+\sqrt {2\log(2p/
\alpha)} \bigr\} \quad\mbox{and}\quad \gamma_j = \max_{1\leq i
\leq n}
|x_{ij}|,
\end{equation}
where $u_n$ is defined below and typically we can select $u_n=o(1)$.%
%to satisfy (\ref{Eq:SideCondition}).

%th6 #&#
%
\begin{theorem}[($\LASSO$ prediction norm for symmetric errors)]\label
{Thm:Symmetric}
Consider a nonparametric regression model with data
$(y_i,z_i)_{i=1}^n$, $y_i=f(z_i)+\sigma\varepsilon_i$, $x_i=P(z_i)$ such
that $\En[x_{\ii j}^2]=1$ ($j=1,\ldots,p$), $\varepsilon_i$'s are
independent symmetric errors, and $\beta_0$ defined as any solution to
(\ref{oracle}). Let the penalty level and loadings as in (\ref
{PenaltyUnbounded}). Assume that
there exist sequences of constants $\eta_1 \geq0$ and $\eta_2 \geq0$
both converging to $0$ and
a sequence of constants $0 \leq u_n \leq1$ such that
$P( \En[\sigma\varepsilon_\ii^2] > (1+u_n) \En[(\sigma\varepsilon
_\ii+
r_\ii)^2]) \leq\eta_1$ and $P(\En[\varepsilon_\ii^2] \leq\{
1+u_n\}^{-1}
) \leq\eta_2$ for all $n$. If $\bbb=\lambda\sqrt{s}/[n\kk]<1$, then
with probability at least $1 - \alpha- \eta_1 - \eta_2$ we have
$\lambda/n\geq c\|\Gamma^{-1}\widetilde S\|_\infty$ and
\[
\|\widehat\beta- \beta_0 \|_{2,n} \leq2\BB
\Bigl(c_s + \sigma\sqrt{\En \bigl[\varepsilon_\ii^2
\bigr]} \Bigr).
\]
\end{theorem}

The rate of convergence will be affected by how fast $\En[\varepsilon
_\ii
^2]$ diverges. That is, the final rate will depend on the particular
tail properties of the distribution of the noise. The rate also depends
on $u_n$ through $\lambda$. In many examples, $u_n$ can be chosen as a
constant or even a sequence going to zero sufficiently slowly, as in
the next corollary where $\varepsilon_i$ follows a $t$ distribution
with 2
degrees of freedom, that is, $\varepsilon_i \sim t(2)$.

%co1 #&#
%
\begin{corollary}[{[$\LASSO$ prediction norm for $\varepsilon_i\sim
t(2)$]}]\label{Thm:t(2)}
Under the setting of Theorem~\ref{Thm:Symmetric}, suppose that
$\varepsilon
_i \sim t(2)$ and are i.i.d. for all $i$. Then for any $\tau\in
(0,1/2)$, with probability at least $1-\alpha- \frac{3}{2}\tau-
\frac
{2\log(4n/\tau)}{nu_n/[1+u_n]} -\frac{72\log^2n}{n^{1/2}(\log n -
6)^2}$, we have $\lambda/n\geq c\|\Gamma^{-1}\widetilde S\|_\infty$
and, if $\bbb=\lambda\sqrt{s}/[n\kk]<1$, we have
\[
\|\widehat\beta- \beta_0 \|_{2,n} \leq2
\bigl(c_s + \sigma\sqrt{\log(4n/\tau) + 2\sqrt{2}/\tau} \bigr)\BB.
\]
\end{corollary}

%re6 #&#
%
\begin{remark}[{[Asymptotic performance in $t(2)$ case]}] Provided that
regressors are uniformly bounded and satisfy the sparse eigenvalues
condition (\ref{SparseEigAssump}), we have that the restricted
eigenvalue $\kappa_{\cc}$ is bounded away from zero for the specified
choice of $\Gamma$. Because Corollary~\ref{Thm:t(2)} ensures $\lambda/n
\geq c\|\Gamma^{-1}\widetilde S\|_\infty$ with the stated probability,
by Lemmas \ref{Lemma:NewKappa} and \ref{Lemma:NewRho} we have
\[
\varrho_{c}+\bbb\lesssim\frac{\lambda\sqrt{s}}{n\kappa_{\cc}} \lesssim(1+u_n)
\sqrt{\frac{s\log(p\vee n) }{n}}\quad\implies\quad\BB\lesssim\sqrt{\frac
{s\log(p\vee n) }{n}}.
\]
Therefore, under these design conditions, assuming that $s\log
(p/\alpha
) = o(n)$ and that $\sigma$ is fixed, and setting $1/\alpha= o(\log
n)$, we can select $u_n = 1/2$ and $\tau= 1/\log n$ in Corollary~\ref
{Thm:t(2)}, to conclude that the $\LASSO$ estimator satisfies
%
%e3.6 #&#
%
\begin{equation}
\label{t2Bound} \| \widehat\beta- \beta_0 \|_{2,n}
\lesssim(c_s + \sigma\sqrt{\log n})\sqrt{\frac{s\log(p\vee n)}{n}},
\end{equation}
with probability $1-\alpha(1+o(1))$. Despite the infinite variance, the
bound (\ref{t2Bound}) differs from the Gaussian noise case only by a
$\sqrt{\log n}$ factor.
\end{remark}

%s4 #&#
\section{Asymptotics analysis under primitive conditions}\label{cec:primitive}

In this section, we formally state an algorithm to compute the
estimators and we provide rates of convergence results under simple
primitive conditions. %The results are based on the finite sample
%analysis under significantly weaker conditions in Section~

We propose setting the penalty level as
%
%e4.1 #&#
%
\begin{equation}
\label{Eq:Def-lambda}\lambda= c \sqrt{n} \Phi^{-1}(1-\alpha/2p),
\end{equation}
where $\alpha$ controls the confidence level, and $c>1$ is a slack
constant similar to \cite{BickelRitovTsybakov2009}, and the penalty
loadings according to the following iterative algorithm.

%al1 #&#
%
\begin{algorithm}[(Estimation of square-root Lasso loadings)]\label{alg1} Choose
$\alpha\in(1/n,1/2]$ and a constant $K\geq1$ as an upper bound on
the number of iterations. (0) Set $k=0$, $\lambda$ as in (\ref
{Eq:Def-lambda}), and $\hat\gamma_{j,0} = \max_{1\leq i\leq n}
|x_{ij}|$ for each $j=1,\ldots,p$. (1) Compute the $\LASSO$ estimator
$\hat\beta$ based on the current penalty loadings $\Gamma= \hat
\Gamma
_k = \operatorname{diag}\{\hat\gamma_{j,k}, j=1,\ldots,p\}$. (2) Set
\[
\hat\gamma_{j,k+1}:= 1 \vee\sqrt{\En \bigl[
x_{\ii j}^2 \bigl(y_\ii-x_\ii'
\hat\beta \bigr)^2 \bigr]}/\sqrt{\En \bigl[
\bigl(y_\ii-x_\ii' \hat\beta
\bigr)^2 \bigr]}.
\]
(3) If $k> K$, stop; otherwise set $k \leftarrow k+1 $ and go to step 1.
\end{algorithm}

%re7 #&#
%
\begin{remark}[(Parameters of the algorithm)]The parameter $1-\alpha$ is
a confidence level which guarantees
near-oracle performance with probability at least $1-\alpha$; we
recommend $\alpha= 0.05/\log n$. The constant $c>1$ is the slack
parameter used as in \cite{BickelRitovTsybakov2009}; we
recommend $c=1.01$. In order to invoke moderate deviation theorem for
self-normalized sums, we need to be able to bound with a high
probability:
%
%e4.2 #&#
%
\begin{equation}
\label{Motivation} \sqrt{\En \bigl[x_{\ii j}^2
\varepsilon_\ii^2 \bigr]}/\sqrt{\En \bigl[
\varepsilon_\ii^2 \bigr]} \leq\gamma_{j,0}.
\end{equation}
The choice of $\hat\gamma_{j,0}= \max_{1\leq i\leq n}|x_{ij}|$
automatically achieves (\ref{Motivation}). Nonetheless, we recommend
iterating the procedure to avoid unnecessary over-penalization, since
at each iteration more precise estimates of the penalty loadings are
achieved. These recommendations are valid either in finite or large
samples under the conditions stated below. They are also supported by
the numerical experiments (see Section~G of supplementary material \cite{BCWsupp}).
\end{remark}

%re8 #&#
%
\begin{remark}[(Alternative estimation of loadings)]\label
{Remark:AlternativeLoadings}
Algorithm \ref{alg1} relies on the $\LASSO$ estimator $\hat\beta$. Another
possibility is to use the ols post $\LASSO$ estimator~$\widetilde
\beta
$. This leads to similar theoretical and practical results. Moreover,
we can define the initial penalty loading as $\hat\gamma_{j,0} = W \{
\En[x_{\ii j}^4]\}^{1/4}$ where the kurtosis parameter $W>\{\barEp
[\varepsilon_\ii^4]\}^{1/4}/\{\barEp[\varepsilon_\ii^2]\}^{1/2}$
is pivotal
with respect to the scaling parameter $\sigma$, but we need to assume
an upper bound for this quantity. The purpose of this parameter is to
bound the kurtosis of the marginal distribution of errors, namely that
of $\bar F_\varepsilon(v) = n^{-1} \sum_{i=1}^n P( \varepsilon_i
\leq v)$. We
recommend $W=2$, which permits a wide class of marginal distributions
of errors, in particular it allows $\bar F_\varepsilon$ to have tails as
heavy as those of $t(a)$ with $a>5$. This method also achieves (\ref
{Motivation}); see Section C.1 of supplementary material \cite{BCWsupp}.
\end{remark}

The following is a set of simple sufficient conditions which yields
practical corollaries.
Let $\ell_n \nearrow\infty$ be a sequence of positive constants.

\renewcommand{\thecond}{P}
\begin{cond}\label{condP}
The noise and regressors obey $\sup_{n\geq1} \barEp[|\varepsilon_\ii|^q] <
\infty$, $q>4$, $\inf_{n\geq1} \min_{1\leq j\leq p} \En[x_{\ii
j}^2\Ep
[\varepsilon_\ii^2]] > 0$, $\sup_{n\geq1} \max_{1\leq j\leq p}
\mathbb{E}_n[|x_j|^{3}\mathrm{E}[|\varepsilon|^{3}]] <\infty$ and
%
%e4.3 #&#
%
\begin{equation}
\label{SparseEigAssump}\sup_{n\geq1} \phi_{\mathrm{max}}\bigl(s\ell
_n,\En \bigl[x_\ii x_\ii'
\bigr]\bigr)/\phi_{\mathrm{min}}\bigl(s \ell _n,\En
\bigl[x_\ii x_\ii' \bigr]\bigr) <\infty.
\end{equation}
Moreover, we have that\vspace*{2pt} $\max_{i \leq n, j \leq p} |x_{ij}|^2/\ell_n
=o(1)$, $\log p \leq C(n/\log^2 n)^{1/3}$, $\ell_n^2s\log(p\vee n)
\leq C n/\log n$, $s\geq1$, and $c^2_s \leq C\sigma^2( s \log(p \vee
n) /n)$.
\end{cond}

%Moreover, we have that $\log p \leq C(n/\log^2 n)^{1/3}$, %$\max_{i
%and $c^2_s \leq C\sigma^2( s \log(p \vee n) /n)$.}

Condition \ref{condP} imposes conditions on moments that allow us to use
results of the moderate deviation theory for self-normalized sums, weak
requirements on $(s,p,n)$, well behaved sparse eigenvalues as a
sufficient condition on the design to bound the impact factors
and a mild condition on the approximation errors (see Remark~\ref
{Remark:ApproximationError} for a discussion and references).

The proofs in this section rely on the following result due to \cite
{jing:etal}.

%le4 #&#
%
\begin{lemma}[(Moderate deviations for self-normalized
sums)]\label{Lemma: MDSN} Let $X_{1},\ldots, X_{n}$ be independent,
zero-mean random variables and $\delta\in(0,1]$. Let
$S_{n,n} = n\En[X_\ii]$, $V^2_{n,n} = n\En[X^2_\ii], M_n = \{\barEp
[|X_\ii
|^{2+\delta}]\}^{1/\{2+\delta\}}/\{\barEp[ X_\ii^2 ] \}^{1/2}<\infty
$ and $\jmath_n \leq\break   n^{{\delta}/{(2(2+\delta))}}M^{-1}_n$. For some
absolute constant $A$, uniformly on
$
0 \leq|x| \leq\break   n^{{\delta}/{(2(2+\delta))}}M^{-1}_n/\jmath_n-1$,
we have
\[
\biggl\vert\frac{P(S_{n,n}/V_{n,n} \geq x) }{ (1-\Phi(x))} - 1 \biggr\vert\leq\frac{A}{ \jmath_n^{2+\delta}}.
\]
\end{lemma}

The following theorem summarizes the asymptotic performance of $\LASSO
$, based upon Algorithm \ref{alg1}, for commonly used designs.

%th7 #&#
%
\begin{theorem}[(Performance of $\LASSO$ and ols post $\LASSO$ under
Condition~\ref{condP})]\label{CorFinite:SqrtLASSO}
Suppose Conditions \ref{condASM} and \ref{condP} hold. Let $\alpha\in(1/n,1/\log n)$,
$c\geq1.01$, the penalty level $\lambda$ be set as in (\ref
{Eq:Def-lambda}) and the penalty loadings as in Algorithm \ref{alg1}. Then for
all $n\geq n_0$, with probability at least $ 1 - \alpha\{ 1 + \bar C /
\log n\} - \bar C \{n^{-1/2}\log n + n^{1-q/4}\}$ we have
\begin{eqnarray*}\|\widehat\beta- \beta_0
\|_{2,n} &\leq&\sigma\bar C \sqrt{ \frac{ s \log(n\vee(p/\alpha
))}{n} }, \\
 \sqrt{
\En \bigl[ \bigl(f_\ii-x_\ii'\hat\beta
\bigr)^2 \bigr]} &\leq&\sigma\bar C \sqrt{ \frac{ s \log
(n\vee
(p/\alpha))}{n} },
\\
\|\widehat\beta-\beta_0\|_1 &\leq&\sigma\bar C \sqrt{
\frac{ s^2
\log
(n\vee(p/\alpha))}{n} } \quad \mbox{and}\quad\bigl |\supp(\widehat\beta)\bigr| \leq \bar C s,
\end{eqnarray*}
where $n_0$ and $\bar C$ depend only on the constants in Condition \ref{condP}.
Moreover, the ols post $\LASSO$ estimator satisfies with the same
probability for all $n\geq n_0$,
\begin{eqnarray*}\|\widetilde\beta- \beta_0
\|_{2,n} &\leq&\sigma\bar C \sqrt{ \frac{ s \log(n\vee(p/\alpha
))}{n} }, \\
\sqrt{\En
\bigl[ \bigl(f_\ii-x_\ii'\widetilde\beta
\bigr)^2 \bigr]} &\leq&\sigma\bar C\sqrt{ \frac{ s \log
(n\vee(p/\alpha))}{n} } \quad\mbox{and} \\
\|\widehat\beta-\beta_0\|_1 &\leq&\sigma\bar C
\sqrt{ \frac{ s^2
\log
(n\vee(p/\alpha))}{n} }.
\end{eqnarray*}
\end{theorem}

%re9 #&#
%
\begin{remark}[(Gaussian-like performance and normalization assumptions)]
Theorem~\ref{CorFinite:SqrtLASSO} yields bounds on the estimation
errors that are ``Gaussian-like,'' namely the factor $\sqrt{\log
(p/\alpha)}$ and other constants in the performance bound are the same
as if errors were Gaussian, but the probabilistic guarantee is not
$1-\alpha$ but rather $1-\alpha+o(1)$, which together with mildly more
restrictive growth conditions is the cost of non-Gaussianity. We also
note that the normalization $\En[x_{j}^2]=1$, $j=1,\ldots,p$ is not used
in the construction of the estimator, and the results of the theorem
hold under the condition: $C_1 \leq\En[x_{j}^2] \leq C_2, j=1,\ldots,p$
uniformly for all $n \geq n_0$, for some positive, finite constants
$C_1$ and $C_2$.
\end{remark}

The results above establish that $\LASSO$ achieves the same near oracle
rate of convergence as Lasso despite not knowing the scaling parameter
$\sigma$. They allow for heteroscedastic errors with mild restrictions
on its moments. Moreover, it allows for an arbitrary number of
iterations. The results also establish that the upper bounds on
the rates of convergence of $\LASSO$ and ols post $\LASSO$ coincide
under these conditions. This is confirmed also by Monte--Carlo experiments
reported in the supplementary material \cite{BCWsupp}, with ols post $\LASSO$ performing no
worse and often outperforming $\LASSO$
due to having a much smaller bias. Notably, this theoretical and
practical performance occurs despite the fact that $\LASSO$ may in
general fail to correctly select the oracle model $T$ as a subset and
potentially select variables not in~$T$.

\renewcommand{\theexample}{S}
\begin{example}[(Performance for Sobolev balls and $p$-rearranged
Sobolev balls)]\label{exS}
In this example, we show how our results apply to an
important class of Sobolev functions, and illustrates how modern
selection drastically reduces the dependency on knowing the order of
importance of the basis functions.

Suppose that $z_i$'s are generated as i.i.d. from $\operatorname{Uniform}(0,1)$, $x_i$'s are
formed as $(x_{ij})_{j=1}^p$ with $x_{ij} = P_j(z_i)$, $\sigma= 1$,
and $\varepsilon_i \sim N(0,1)$. Following \cite{TsybakovBook}, consider
an orthonormal bounded basis $\{P_j(\cdot)\}_{j=1}^\infty$ in
$L^2[0,1]$, consider functions $f(z)=\sum_{j=1}^\infty\theta_jP_j(z) $
in a Sobolev space $\mathcal{S}(\alpha,L)$ for some $\alpha\geq1$
and $L>0$.
This space consists of functions whose Fourier coefficients $\theta$
satisfy $\sum_{j=1}^\infty|\theta_j| < \infty$ and
\[
\theta\in\Theta(\alpha, L) = \Biggl\{ \theta\in\ell^2(
\mathbb{N})\dvtx%
\sum
_{j=1}^\infty j^{2\alpha} \theta_j^2
\leq L^2
\Biggr\}.
\]

We also consider functions in a $p$-rearranged Sobolev space $\mathcal
{RS}(\alpha,p,L)$. These functions take the form $f(z)=\sum_{j=1}^\infty
\theta_jP_j(z) $\vadjust{\goodbreak} such that $\sum_{j=1}^\infty|\theta_j| < \infty$ and
$\theta\in\Theta^{R}(\alpha,p, L)$, where
\[
\Theta^{R}(\alpha,p,L) = \lleft\{ \theta\in\ell^2(
\mathbb{N}) \dvtx\quad
\begin{array} {l}\displaystyle\exists
\mbox{ permutation} \Upsilon \dvtx\{1,\ldots,p\} \to\{1,\ldots,p\}
\\[3pt]
\displaystyle\sum_{j=1}^p j^{2\alpha}
\theta_{\Upsilon
(j)}^2 + \sum_{j=p+1}^\infty
j^{2\alpha}\theta_{j}^2 \leq L^2
\end{array} %
\rright\}.
\]
Note that $\mathcal{S}(\alpha,L) \subset\mathcal{RS}(\alpha,p,L)$.

In the supplementary material \cite{BCWsupp}, we show that the rate-optimal choice for the
size of the support of the oracle model $\beta_0$ is $s\lesssim
n^{1/[2\alpha+1]}$. This rate can be achieved with the support
consisting of indices $j$ that correspond to the $s$ largest
coefficients $|\theta_j|$. The oracle projection estimator $\hat\beta
^{\mathrm{or}}$ that uses these ``ideal'' $s$ components achieves
optimal prediction error rate uniformly over the regression functions
$f \in\mathcal{S}(\alpha,L)$ or $f \in\mathcal{RS}(\alpha,p,L)$:
$ (\En[\{f_\ii-\sum_{j=1}^\infty\hat\beta^\mathrm{or}_jP_j(z_\ii
)\}
^2])^{1/2} \lesssim_P n^{-\alpha/[2\alpha+1]}$.
Under mild regularity conditions, as in Theorem~\ref
{CorFinite:SqrtLASSO}, $\LASSO$ estimator $\hat\beta$
that uses $x_i = (P_1(z_i),\ldots, P_p(z))'$ achieves a near-optimal rate
uniformly over the regression functions $f \in\mathcal{S}(\alpha,L)$
or $f \in\mathcal{RS}(\alpha,p,L)$:
\[
\sqrt{\En \bigl[ \bigl(f_\ii-x_\ii'
\hat\beta \bigr)^2 \bigr]} \lesssim_P n^{-\alpha
/[2\alpha
+1]}
\sqrt{\log(n\vee p)},
\]
without knowing the ``ideal'' $s$ components among $x_i$. The same
statement also holds for the ols post $\LASSO$
estimator $\widetilde\beta$.

Therefore, the $\LASSO$ and ols post $\LASSO$ estimators achieve near
oracle rates uniformly over
rearranged Sobolev balls under mild conditions. As a contrast,
consider the ``naive oracle'' series projection estimator that uses the
first $s$ components of the basis, assuming that the parameter space is
$\mathcal{S}(\alpha,L)$. This estimator achieves the optimal rate for
the Sobolev space $\mathcal{S}(\alpha,L)$, but fails to be uniformly
consistent over $p$-rearranged Sobolev space $\mathcal{RS}(\alpha
,p,L)$, since we can select a model $f \in\mathcal{RS}(\alpha,p,L)$
such that its first $s$ Fourier coefficients are zero,
and the remaining coefficients are nonzero, therefore, the ``naive
oracle'' fit will be $0$ plus a centered noise, and the estimator will
be inconsistent for this $f$.
\end{example}

We proceed to state a result on estimation of $\sigma^2$ under the
asymptotic framework.

%
%co2 #&#
%
\begin{corollary}[(Estimation of $\sigma^2$ under asymptotics)]\label
{Thm:SigmaInference}
Suppose Conditions~\ref{condASM} and \ref{condP} hold. Let $\alpha\in(1/n,1/\log n)$,
$c\geq1.01$, the penalty level $\lambda$ be set as in (\ref
{Eq:Def-lambda}) and the penalty loadings as in Algorithm \ref{alg1}. Then for
all $n\geq n_0$, with probability at least $ 1 - \alpha\{ 1 + \bar C /
\log n\} - \bar C \{n^{-1/2}\log n +n^{1-q/4}\} - 2\delta$,
\[
\bigl\vert\widehat Q(\widehat\beta)- \sigma^2 \bigr\vert\leq
\frac
{\sigma^2
\bar C s \log(n\vee(p/\alpha))}{n} + \frac{\sigma^2 \bar C \sqrt {s\log
(p\vee n)}}{\sqrt{\delta}n^{1-1/q}} + \frac{\sigma^2 \bar C}{\sqrt {\delta n}}.
\]
Moreover, provided further that $s^2\log^2(p\vee n) \leq Cn/\log n$, we
have that
\[
 \bigl\{\sigma^2\xi_n\bigr\}^{-1} n^{1/2} \bigl( \widehat Q(\widehat\beta)-
\sigma
^2 \bigr) \Rightarrow N(0,1),
\]
where $\xi_n^2 = \barEp[\{\varepsilon
_\ii
^2-\Ep[\varepsilon_\ii^2]\}^2]$.
\end{corollary}

This result extends \cite{BC-PostLASSO,SunZhang2012} to the
heteroscedastic, non-Gaussian cases.

%s5 #&#
\section{An application to a generic semi-parametric inference problem}\label{sec5}

In this section, we present a generic application of the methods of
this paper to semiparametric
problems, where some lower-dimensional structural parameter is of
interest and the $\LASSO$ or ols post $\LASSO$
are used to estimate the high-dimensional nuisance function. We denote
the true
value of the target parameter by $\theta_0 \in\Theta\subset\RR
^{d}$, and assume that
it satisfies the following moment condition:
%
%e5.1 #&#
%
\begin{equation}
\label{eq:ivequation} \Ep \bigl[ \psi \bigl(w_i, \theta_0,
h_0(z_i) \bigr) \bigr] = 0,\qquad i =1,\ldots,n,
\end{equation}
where $w_i$ is a random vector taking values in $\mathcal{W}$,
containing vector $z_i$ taking values
in $\mathcal{Z}$ as a subcomponent; the function $(w, \theta, t)
\mapsto\psi(w, \theta, t) = (\psi_j(w, \theta, t))_{j=1}^d$ is a
measurable map from an open neighborhood of $ \mathcal{W} \times
\Theta
\times T $, a subset of $\RR^{d_w + d+ d_t}$, to $\RR^{d}$, and $z
\mapsto h_0(z) = (h_{0m}(z))_{m=1}^M$ is the nuisance function mapping
$\mathcal{Z}$ to
$T \subset\RR^M$. We note that $M$ and $d$ are fixed and do not
depend on $n$ in what follows.

Perhaps the simplest, that is linear,
example of this kind arises in the instrumental variable (IV)
regression problem in \cite
{BellChenChernHans:nonGauss,BellChernHans:Gauss}, where $ \psi(w_i,
\theta_0, h_0(z_i) ) = (u_i - \theta_0 d_i) h_0(z_i)$, where $u_i$ is
the response variable, $d_i$ is the endogenous variable,
$z_i$ is the instrumental variable, $h_0(z_i) = \Ep[d_i \mid z_i]$ is
the optimal instrument, and $\Ep[(u_i - \theta_0 d_i) \mid z_i] =0$.
Other examples include partially linear models, heterogeneous treatment
effect models, nonlinear instrumental variable, $Z$-problems as well as
many others
(see, e.g., \cite
{amemiya1977maximum,huber1967behavior,hansen1982large,chamberlain,BellChenChernHans:nonGauss,BCH2011:InferenceGauss,c.h.zhang:s.zhang,BelloniChernozhukovHansen2011,gautier:tsybakov,vandeGeerBuhlmannRitov2013,BCK-LAD,ren2013asymptotic,BelloniChernozhukovWei2013,Farrell:JMP,BCFH-LATE}), which all give rise to nonlinear moment conditions with
respect to the nuisance functions.

We assume that the nuisance functions $h_{0}$ arise as conditional
expectations of some variables that can be modeled and estimated in the
approximately sparse framework, as formally described below. For
instance, in the example mentioned above, the function $h_0$ is indeed
a conditional expectation of the endogenous variable given the
instrumental variable. We let
$\hat h= (\hat h_{m})_{m=1}^M$ denote\vspace*{1pt} the estimator of $h_0$, which
obeys conditions stated below. The estimator $\hat\theta$ of $\theta
_0$ is constructed as any approximate $\varepsilon_n$-solution in
$\Theta$
to a sample analog of the moment condition above:
%
%e5.2 #&#
%
\begin{equation}
\label{eq:analog} \bigl\|\En \bigl[ \psi \bigl(w, \hat\theta,
\hat h(z) \bigr) \bigr] \bigr\|
\leq\varepsilon_n\qquad \mbox{where } \varepsilon_n = o
\bigl(n^{-1/2} \bigr).
\end{equation}

The key condition needed for regular estimation of $\theta_0$ is the
orthogonality condition:
%
%e5.3 #&#
%
\begin{equation}
\label{eq:orthogonality} \Ep \bigl[ \partial_t \psi \bigl(w_i,
\theta_0, h_0(z_i) \bigr) |z_i
\bigr] = 0, \qquad i =1,\ldots,n,
\end{equation}
where here and below we use the symbol $\partial_t$ to abbreviate
$\frac
{\partial}{\partial t'}$.
For instance, in the IV example this condition holds, since $\partial_t
\psi(w_i, \theta_0, h_0(z_i) ) = (u_i - \theta_0 d_i)$ and $\Ep
[(u_i -
\theta_0 d_i)|z_i] = 0$ by assumption. In other examples, it is
important to construct the scores
that have this orthogonality property. Generally, if we have a score,
which identifies the target parameter but does not have the
orthogonality property, we can construct the score that has the
required property by projecting the original score onto the
orthocomplement of the tangent space for the nuisance parameter; see,
for example, \cite{vdV-W,vdV,kosorok:book} for detailed discussions. This
often results in a semiparametrically efficient score function.

The orthogonality condition reduces sensitivity to ``crude'' estimation
of the nuisance function $h_0$.
Indeed, under appropriate sparsity assumptions stated below, the
estimation errors for $h_0$,
arising as sampling, approximation, and model selection errors, will be
of order $o_P( n^{-1/4})$.
The orthogonality condition together with other conditions will
guarantee that these estimation
errors do not impact the first-order asymptotic behavior of the
estimating equations, so that
%
%e5.4 #&#
%
\begin{equation}
\sqrt{n} \En \bigl[ \psi \bigl(w, \hat\theta, \hat h(z) \bigr) \bigr] = \sqrt{n}
\En \bigl[ \psi \bigl(w, \hat\theta, h_0(z) \bigr) \bigr] +
o_P(1).
\end{equation}
This leads us to a regular estimation problem, despite $\hat h$ being
highly nonregular.

In what follows, we shall denote by $c$ and $C$ some positive
constants, and by $L_n$ a sequence of positive
constants that may grow to infinity as $n \to\infty$.

\renewcommand{\thecond}{SP}
\begin{cond}\label{condSP}
For each $n$, we observe the independent
data vectors $(w_i)_{i=1}^n$ with law determined by the probability
measure $\mathrm{P}=\mathrm{P}_n$. Uniformly, for all n the following
conditions hold. (i) The true parameter values $\theta_0$ obeys (\ref
{eq:ivequation}) and is interior relative to $\Theta$, namely there is
a ball of fixed positive radius centered at $\theta_0$ contained in
$\Theta$, where $\Theta$ is a fixed compact subset of $\RR^d$. (ii) The
map $\nu\mapsto\psi(w, \nu)$ is twice continuously differentiable
with respect to $\nu= (\nu_k)_{k=1}^K = (\theta, t )$ for all $\nu
\in
\Theta\times T$, where $T$ is convex, with
$\sup_{\nu\in\Theta\times T} | \partial_{\nu_k}\partial_{\nu_r}
\psi
_j(w_i, \nu)| \leq L_n$ a.s., for all $k, r \leq K$, $j\leq d$, and
$i\leq n$. The conditional second moments of the first derivatives are
bounded as follows: $\mathrm{P}$-a.s. $ \Ep( \sup_{\nu\in\Theta
\times T}| \partial_{\nu_k} \psi_j(w_i, \nu)|^2 \mid z_i) \leq C$ for
each $k$, $j$ and $i$. (iii)~The orthogonality condition (\ref
{eq:orthogonality}) holds. (iv) The following identifiability condition
holds: for all $\theta\in\Theta$, $\|\barEp[\psi(w, \theta, h_0(z))]
\| \geq2^{-1} ( \|J_n (\theta- \theta_0)\| \wedge c)$,
where $J_n:= \barEp[\partial_\theta\psi(w, \theta_0, h_0(z))] $ has
singular values bounded away from zero and above. (v) $\barEp[\|\psi
(w, \theta_0, h_0(z))\|^{3}]$ is bounded from above.
\end{cond}

In addition to the previous conditions, Condition \ref{condSP} imposes standard
identifiability and certain smoothness on the problem, requiring second
deriv\-atives to be bounded by $L_n$, which is allowed to grow with $n$
subject to restrictions specified below. It is possible to allow for
nondifferentiable $\psi$ at the cost of a more complicated argument;
see \cite{BCK-LAD}. In what follows, let $\delta_n \searrow0$ be a
sequence of constants approaching zero from above.

\renewcommand{\thecond}{AS}
\begin{cond}\label{condAS}
The following conditions hold for each
$n$. (i) The function $h_0=(h_{0m})_{m=1}^M\dvtx\mathcal{Z}\mapsto T$ is
approximately sparse, namely, for each $m$, $h_{0m}(\cdot) $= $\sum_{l=1}^p P_{ml}(\cdot) \beta_{0ml} + r_{m}(\cdot)$,
where $P_{ml}\dvtx\mathcal{Z} \mapsto\RR$ are approximating functions,
$ \beta_{0m} = (\beta_{0ml})_{l=1}^p$ obeys $|\supp(\beta_{0m})|
\leq
s$, $s \geq1$, and the approximation errors $(r_m)_{m=1}^M\dvtx
\mathcal
{Z} \to\RR$ obey $\barEp[r^2_m(z)] \leq C s \log(p \vee n)/n$. There
is an estimator $\hat h_{m}(\cdot)= \sum_{l=1}^p P_{ml}(\cdot) \hat
\beta_{ml}$ of $h_{0m}$ such that, with probability at least $1-\delta
_n$, $\hat h = (\hat h_{m})_{m=1}^M$ maps $\mathcal{Z}$ into $T$,
$\hat
\beta_{m} = (\hat\beta_{ml})_{l=1}^p$ satisfies $\|\hat\beta_m -
\beta
_{0m}\|_1 \leq C \sqrt{ s^2 \log(p \vee n)/n}$ and $\En[(\hat h_{m}(z)
- h_{0m}(z))^{2}]\leq C s \log(p \vee n)/n$ for all $m$. (ii) The
scalar random variables $\dot\psi_{mjl}(w_i):= \partial_{t_m} \psi_j
(w_i, \theta_0, h_0(z)) P_{ml}(z_i)$ obey $\max_{m,j,l} \En[|\dot
\psi
_{mjl}(w) |^2] \leq L^2_n$
with probability at least $1-\delta_n$ and $\max_{m,j,l}$ $(\barEp
[|\dot\psi_{mjl}(w) |^3])^{1/3}/(\barEp[|\dot\psi_{mjl}(w)
|^2])^{1/2} \leq M_n$.
(iii) Finally,
the following growth restrictions hold as $n \to\infty$:
%
%e5.5 #&#
%
\begin{equation}
\label{key growth} L^2_n s^2 \log^2
(p \vee n)/ n \to0 \quad\mbox{and}\quad \log(p\vee n) n^{-1/3}
M_n^2 \to0.
\end{equation}
\end{cond}

The assumption records a formal sense in which approximate sparsity is
used, as well as requires reasonable behavior of the estimator $\hat
h$. In the previous
sections, we established primitive conditions under which this behavior
occurs in problems
where $h_0$ arise as conditional expectation functions. By virtue of
(\ref{key growth}) the assumption implies that $\{\En(\hat h_{m}(z) -
h_{0m}(z))^2\}^{1/2} = o_P(n^{-1/4})$. It is standard that the
square of this term multiplied by $\sqrt{n}$ shows up as a
linearization error for $\sqrt{n}(\hat\theta- \theta_0)$ and,
therefore, this term does not affect its first-order behavior.
Moreover, the assumption implies by virtue of (\ref{key growth}) that
$\|\hat\beta_m - \beta_{0m}\|_1 = o_P(L_n^{-1}( \log(p \vee
n))^{-1})$, which is used to control
another key term in the linearization as follows:
\[
\sqrt{n} \max_{j,m,l} \bigl|\En \bigl[ \dot\psi_{mjl} (w)
\bigr]\bigr| \| \hat\beta_m - \beta_{0m}\|_1
\lesssim_P L_n \sqrt{\log(p \vee n)} \| \hat\beta
_m - \beta_{0m}\|_1 = o_P(1),
\]
where the bound follows from an application of the moderate deviation
inequalities for self-normalized sums (Lemma~\ref{Lemma: MDSN}). The
idea for this type of control is borrowed from \cite
{BellChenChernHans:nonGauss}, who used it in the IV model above.

%th8 #&#
%
\begin{theorem}\label{theorem:semiparametric} Under Conditions \ref{condSP} and
\ref{condAS}, the estimator $\hat\theta$ that obeys equation (\ref{eq:analog})
and $\widehat\theta\in\Theta$ with probability approaching 1, satisfies
$ \sqrt{n}(\hat\theta- \theta_0) = - J^{-1}_n \frac{1}{\sqrt {n}}\sum_{i=1}^n \psi(w_i, \theta_0, h_0(z_i)) + o_P(1)$.
Furthermore, provided
$\Omega_n = \barEp[\psi(w, \theta_0, h_0(z)) \psi(w, \theta_0,
h_0(z))']$ has eigenvalues bounded
away from zero,
\[
\Omega_n^{-1/2}J_n \sqrt{n}(\hat\theta-
\theta_0) \Rightarrow N(0,I).\vadjust{\goodbreak}
\]
\end{theorem}
This theorem extends the analogous result in \cite
{BellChernHans:Gauss,BellChenChernHans:nonGauss} for a specific linear
problem to a generic nonlinear setting, and could be of independent
interest in many problems cited above.
The theorem allows for the probability measure $\mathrm{P}=\mathrm
{P}_n$ to change with $n$, which implies that the confidence bands
based on the result have certain uniform validity (``honesty'') with
respect to $\mathrm{P}$, as formalized in \cite
{BelloniChernozhukovHansen2011}, thereby constructively addressing
\cite
{leeb:potscher:pms}'s critique.
See \cite{BCK-LAD} for a generalization of the result above to the case
$\operatorname{dim}(\theta_0)\gg n$.

\renewcommand{\theexample}{PL}
\begin{example}\label{exPL}
It is instructive to conclude this section by
inspecting the example of approximately sparse partially linear
regression \cite{BelloniChernozhukovHansen2011,BCH2011:InferenceGauss},
which also nests the sparse linear regression model \cite
{c.h.zhang:s.zhang}. The partially linear model of \cite{robinson}
is
\begin{eqnarray*}
y_i &=& d_i \theta_0 + g(z_i)
+ \varepsilon_i,\qquad \mathrm{E}[\varepsilon_i|z_i,
d_i] = 0,\\
 d_i &=& m(z_i) + v_i,\qquad
\mathrm{E}[v_i|z_i] = 0.
\end{eqnarray*}
The target is the real parameter $\theta_0$, and an orthogonal score
function $\psi$ for this parameter is
$
\psi(w_i, \theta, t) = (y_i - \theta(d_i - t_2 ) - t_1) (d_i - t_2)$,
where $t = (t_1, t_2)'$ and $w_i= (y_i, d_i, z_i')'$. Let $\ell(z_i):=
\theta_0 m(z_i) + g(z_i)$, and $h_0(z_i):= ( \ell(z_i),\break m(z_i))' =
(\mathrm{E}[y_i|z_i],  \mathrm{E}[d_i|z_i])' $. Note that
\[
\mathrm{E} \bigl[\psi \bigl(w_i, \theta_0,
h_0(z_i) \bigr)|z_i \bigr] = 0 \quad\mbox{and}\quad
 \mathrm{E} \bigl[\partial_t \psi \bigl(w_i, \theta,
h_0(z_i) \bigr)|z_i \bigr] = 0,
\]
so
the orthogonality condition holds. If the regression functions $\ell(z_i)$
and $m(z_i)$ are approximately sparse with respect to
$x_i = P(z_i)$, we can estimate them by $\LASSO$
or ols post $\LASSO$ regression of $y_i$ on $x_i$
and $d_i$ on $x_i$, respectively. The resulting
estimator $\hat\theta$ of $\theta_0$, defined as a solution to (\ref
{eq:analog}), is a $\LASSO$ analog of Robinson's \cite{robinson}
estimator. If assumptions of Theorem~\ref{theorem:semiparametric} hold,
$\hat\theta$ obeys
\[
\Omega_n^{-1/2} J_n \sqrt{n}(\hat\theta-
\theta
_0) \Rightarrow N(0,1)
\]
for $J_n = \bar{\mathrm{E}}[v^2]$ and $\Omega_n
= \bar{\mathrm{E}}[\varepsilon^2 v^2]$. In a
homoscedastic model, $\hat\theta$ is semiparametrically efficient,
since its asymptotic variance $\Omega_n/J_n^2$ reduces to the
efficiency bound $\mathrm{E}[\varepsilon^2]/\mathrm{E}[v^2]$ of Robinson
\cite{robinson}; as pointed out in \cite
{BelloniChernozhukovHansen2011,BCH2011:InferenceGauss}. In the linear
regression model, this estimator is first-order equivalent to, but
different in finite samples from, a one-step correction from the scaled
Lasso proposed in \cite{c.h.zhang:s.zhang}; in the partially linear
model, it is equivalent to the post-double selection method of~\mbox{\cite
{BelloniChernozhukovHansen2011,BCH2011:InferenceGauss}}.
\end{example}

\begin{appendix}\label{app}
%s6 #&#
\section{Proofs of Section~\texorpdfstring{\lowercase{\protect\ref{sec:finite-result}}}{3}}%\ref{Sec:NewIden}}
\begin{pf*}{Proof of Lemma~\ref{Lemma:NewKappa}}
The first result holds by the inequalities given in the main text.

%Note that for a diagonal matrix $\Gamma$ with nonnegative entries, $
%{\small$$\begin{array}{rl} \lkk\geq\frac{\|\delta\|_{2,n}}{\|\Gamma

To show the next statement, note that $T$ does not change by including
repeated regressors (indeed, since $T$ is selected by the oracle (\ref
{oracle}), $T$ will not contain repeated regressors). Let $R$ denote
the set of repeated regressors and $\tilde x_i = (x_i',z_i')'$ where
$x_i\in\RR^p$ is the vector of original regressors and $z_i\in\RR
^{|R|}$ the vector of repeated regressors. We denote by $\widetilde
\Gamma$ and $\|\cdot\|_{2,\tilde n}$ the penalty loadings and the
prediction norm associated with $(\tilde x_i)_{i=1}^n$. Let $\tilde
\delta= ({\delta^1}',{\delta^2}')'$, where $\delta^1 \in\RR^p$ and
$\delta^2\in\RR^{|R|}$, define $\bar\delta^2 \in\RR^p$ so that
$\bar
\delta^2_j = \delta^2_j $ if $j\in R$, and $\bar\delta^2_j=0$ if
$j\notin R$, and denote $\delta= \delta^1 + \bar\delta^2$. It
follows that
\[
\lkk\geq\frac{\|\tilde\delta\|_{2,\tilde n}}{\|\widetilde\Gamma
\tilde
\delta_T\|_1-\|\widetilde\Gamma\tilde\delta_{T^c}\|_1}=\frac{\|
\delta\|
_{2,n}}{\|\Gamma\delta_T^1\|_1-\|\Gamma\delta_{T^c}^1\|_1-\|\Gamma
\bar
\delta_{T}^2\|_1-\|\Gamma\bar\delta_{T^c}^2\|_1},
\]
which is minimized in the case that $\delta^1 = \delta$ and $\bar
\delta
^2 =0$. Thus, the worst case for $\kk$ correspond to $\bar\delta^2 =0$
which corresponds to ignoring the repeated regressors.
\end{pf*}

\begin{pf*}{Proof of Lemma~\ref{Lemma:NewRho}} The first part is shown in
the main text. The second part is proven in supplementary material \cite{BCWsupp}.
%The first part follows from H\"older's inequality and the definition
%of $\kappa_{\cc}$. To show the second part note that $T$ does not
%change by including repeated regressors. Let $R$ denote the set of
%repeated regressors and $\tilde x_i = (x_i',z_i')'$ where $x_i\in
%vector of repeated regressors. We denote by $ \breve{S}$, $\widetilde
%prediction norm associated with $(\tilde x_i)_{i=1}^n$. Let $\tilde
%It follows that $|\widetilde S'\delta|/\|\delta\|_{2,n}=|\breve{S}'
%First, $\|\Gamma(\delta+\beta_0)\|_1 \leq\|\Gamma(\delta^1+\beta_0)
%(\beta_0',0')')\|_1\leq\cc\|\Gamma\beta_0\|_1$.
%Second, if $\tilde\delta\in\Delta_{\cc}$, $\| \Gamma\delta_{T^c}^1
\end{pf*}

\begin{pf*}{Proof of Lemma~\ref{Lemma:Domination}}
By definition of
$\widehat\beta$, $
\sqrt{\widehat Q(\hat\beta)} - \sqrt{\widehat Q (\beta_0)} \leq
\frac
{\lambda}{n} \|\Gamma\beta_0\|_{1} - \frac{\lambda}{n}\|\Gamma
\hat
\beta\|_{1}$. By convexity of $\sqrt{\widehat Q}$, by $-\widetilde S
\in\partial\sqrt{\widehat Q} (\beta_0)$, and by $\lambda/n \geq
cn\|
\Gamma^{-1}\widetilde S\|_{\infty}$, we have
$\sqrt{\hat Q (\hat\beta)} - \sqrt{\hat Q(\beta_0)}
\geq
-\widetilde S'\hat\delta
\geq
- \|\Gamma^{-1}\widetilde S\|_{\infty} \|\Gamma\hat\delta\|_{1}
\geq- \frac{\lambda}{cn} \|\Gamma\hat\delta\|_{1}
$ where $\hat\delta= \hat\beta-\beta_0$. Combining the lower and upper
bounds yields
$
\|\Gamma\hat\delta\|_{1} \geq c ( \|\Gamma(\beta_0 + \hat\delta)
\|
_1 - \|\Gamma\beta_0\|_{1} )$.
Thus, $\hat\delta\in R_c$; that $\hat\delta\in\Delta_{\bar c}$
follows by a standard argument based on elementary inequalities.
\end{pf*}

\begin{pf*}{Proof of Theorem~\ref{Thm:NewRateSquareRootLASSONonparametric}}
First, note that by Lemma~\ref{Lemma:Domination} we have $\widehat
\delta:=\widehat\beta-\beta_0 \in R_c$.
By optimality of $\hat\beta$ and definition of $\kk$, $\bbb=\lambda
\sqrt{s}/[n\kk]$, we have
%
%e6.1 #&#
%
\begin{equation}
\label{NewOPT} \sqrt{\widehat Q(\hat\beta)} - \sqrt{\widehat Q (
\beta_0)} \leq\frac
{\lambda}{n} \|\Gamma\beta_0
\|_{1} - \frac{\lambda}{n}\|\Gamma\hat\beta\|_{1} \leq\bbb
\|\widehat\delta\|_{2,n}.
\end{equation}
Multiplying both sides by $\sqrt{\widehat Q(\widehat\beta)} + \sqrt {\widehat Q(\beta_0)}$ and since $(a+b)(a-b)=a^2-b^2$
%
%e6.2 #&#
%
\begin{equation}
\label{Eq:Id}\|\widehat\delta\|_{2,n}^2 \leq2 \En \bigl[(
\sigma\varepsilon_\ii+r_\ii) x_\ii'
\widehat\delta \bigr] + \bigl(\sqrt{\widehat Q(\widehat\beta)} + \sqrt{\widehat
Q( \beta_0)} \bigr) \bbb\|\widehat\delta\|_{2,n}.
\end{equation}
From (\ref{NewOPT}), we have $
\sqrt{\widehat Q(\hat\beta)}
\leq\sqrt{\widehat Q (\beta_0)} + \bbb\|\widehat\delta\|_{2,n}$
so that
\[
\|\hat\delta\|_{2,n}^2 \leq 2\En \bigl[(\sigma
\varepsilon_\ii+r_\ii) x_\ii'
\widehat\delta \bigr] + 2\sqrt{\widehat Q(\beta_0)} \bbb\|\hat \delta
\| _{2,n} + \bbb^2\|\hat\delta\|_{2,n}^2.
\]
Since $|\En[(\sigma\varepsilon_\ii+r_\ii)x_\ii'\hat\delta
]|=\sqrt{\widehat
Q(\beta_0)}|\widetilde S'\hat\delta|\leq\sqrt{\widehat Q(\beta
_0)}\varrho_{c}\|\hat\delta\|_{2,n}$, we obtain
\begin{eqnarray*}
\|\hat\delta\|_{2,n}^2 & \leq& 2\sqrt{\widehat Q(
\beta_0)}\varrho_{c}\| \hat\delta\|_{2,n} + 2
\sqrt{\widehat Q(\beta_0)} \bbb\|\hat\delta\| _{2,n} +
\bbb^2\|\hat\delta\|_{2,n}^2,
\end{eqnarray*}
and the result follows provided $\bbb<1$.
\end{pf*}

\begin{pf*}{Proof of Theorem~\ref{Thm:Q}} We have $\hat\delta:=
\widehat
\beta- \beta_0 \in R_{c}$ under the condition that $\lambda/n \geq
c\|
\Gamma^{-1}\widetilde S\|_\infty$ by Lemma~\ref{Lemma:Domination}. We
also have
$\bbb=\lambda\sqrt{s}/[n\kk] < 1$ by assumption.

First, we establish the upper bound. By the previous proof, we have inequality
(\ref{NewOPT}). The bound follows from Theorem~\ref
{Thm:NewRateSquareRootLASSONonparametric} to bound $\|\hat\delta\|_{2,n}$.
To establish the lower bound, by convexity of $\sqrt{\widehat Q}$ and
the definition of $\varrho_{c}$ we have
$
\sqrt{\widehat Q(\widehat\beta)} - \sqrt{\widehat Q(\beta_0)} \geq
-\widetilde S'\hat\delta\geq-\varrho_{c}\|\hat\delta\|_{2,n}$.
Thus, by Theorem~\ref{Thm:NewRateSquareRootLASSONonparametric} we obtain
$\sqrt{\widehat Q(\widehat\beta)} - \sqrt{\widehat Q(\beta_0)}
\geq-2
\sqrt{\widehat Q(\beta_0)}\varrho_{c} \BB$.

Moreover, by the triangle inequality
\begin{eqnarray*}
\bigl\vert\sqrt{\widehat Q(\widehat\beta)}- \sigma \bigr\vert \leq
\bigl \vert\sqrt{\widehat
Q(\widehat\beta)}- \sigma \bigl\{\En \bigl[ \varepsilon_\ii^2
\bigr] \bigr\} ^{1/2} \bigr\vert + \sigma\bigl \vert \bigl\{\En \bigl[
\varepsilon_\ii^2 \bigr] \bigr\}^{1/2}-1 \bigr\vert
\end{eqnarray*}
and the right-hand side is bounded by $\|\widehat\beta-\beta_0\|_{2,n}
+ c_s + \sigma|\En[\varepsilon_\ii^2] - 1|$.
\end{pf*}

\begin{pf*}{Proof of Theorem~\ref{Thm:Sparsity}}
For notational convenience, we denote $\phi_n(m) = \phi_{\mathrm
{max}}(m,\Gamma ^{-1}\En[x_\ii x_\ii']\Gamma^{-1})$. %We can assume
%that $\hat Q(
We shall rely on the following lemma, whose proof is given after the
proof of this theorem.
%
%le5 #&#
%
\begin{lemma}[(Relating sparsity and prediction norm)]\label{Lemma:Proof-D}
Under Condition \ref{condASM}, let $G \subseteq\supp(\widehat\beta)$. For any
$\lambda> 0$, we have
\begin{eqnarray*}
\frac{\lambda}{n}\sqrt{\hat Q(\hat\beta)}\sqrt{ |G| } &\leq&\sqrt {|G|}\bigl\|
\Gamma^{-1}\widetilde S\bigr\|_\infty\sqrt{\widehat Q(
\beta_0)} \\
&&{}+ \sqrt{\phi_{\mathrm{max}}\bigl(|G\setminus T|,
\Gamma ^{-1}\En \bigl[x_\ii x_\ii'
\bigr]\Gamma^{-1}\bigr)} \| \hat\beta- \beta_0
\|_{2,n}.
\end{eqnarray*}
\end{lemma}
Define $\widehat m:= |\supp(\widehat\beta)\setminus T|$. In the event
$\lambda/n \geq c \|\Gamma^{-1}\widetilde S\|_\infty$, by Lemma~\ref
{Lemma:Proof-D}
%
%e6.3 #&#
%
\begin{equation}
\label{Eq:StepSparsity} \biggl(\sqrt{\frac{\hat Q(\hat
\beta
)}{\hat Q(\beta_0)}} - \frac{1}{c} \biggr)
\frac{\lambda}{n}\sqrt{\hat Q(\beta_0)}\sqrt{\bigl |\supp(\widehat \beta)\bigr|
} \leq\sqrt{\phi_n(\hat m)} \| \hat\beta- \beta_0
\|_{2,n}.
\end{equation}
Under the condition $\bbb= \lambda\sqrt{s}/[n \kk]<1$, we have by
Theorems \ref{Thm:NewRateSquareRootLASSONonparametric} and \ref
{Thm:Q} that
\[
\biggl(1 - 2\varrho_{c}\BB- \frac{1}{c} \biggr)
\frac{\lambda}{n}\sqrt{\hat Q(\beta_0)}\sqrt{ \bigl|\supp(\widehat \beta)\bigr|
} \leq\sqrt{\phi_n(\hat m)} 2\sqrt{\widehat Q(\beta_0)}
\BB,
\]
where $\BB= \frac{\varrho_{c} + \bbb}{1-\bbb^2}$. Since we assume $
2\varrho_{c} \BB\leq1/(c\cc)$, we have
\[
\sqrt{ \bigl|\supp(\widehat\beta)\bigr| } \leq2\cc\sqrt{\phi_n(\hat m) }
\frac
{n}{\lambda} \BB= \sqrt{s}\sqrt{\phi_n(\hat m) } 2\cc\BB/(\bbb
\kk),
\]
where the last equality follows from $\bbb= \lambda\sqrt{s}/[n\kk]$.

Let $L:= 2\cc\BB/(\bbb\kk)$. Consider any $m \in\mathcal{M}$, and
suppose $\widehat m > m$. Therefore, by the sublinearity of maximum
sparse eigenvalues (see Lemma~3 in \cite{BC-PostLASSO}), $\phi_n(\ell
m) \leq \lceil\ell  \rceil\phi_n(m)$ for $\ell\geq1$,
and $\widehat m
\leq
|\supp(\widehat\beta)|$ we have $\hat m \leq s \cdot \lceil
\frac {\hat m}{m}  \rceil\phi_n(m) L^2$. Thus, since $
\lceil k  \rceil< 2k$ for any $k\geq1$ we have
$ m < s \cdot2\phi_n(m) L^2$ which violates the condition of $m \in
\mathcal{M}$ and $s$. Therefore, we must have $\widehat m \leq m$.
Repeating the argument once more with $\widehat m \leq m$ we obtain
$ \hat m \leq s \cdot\phi_n(m) L^2$.
The result follows by minimizing the bound over $m \in\mathcal{M}$.

To show the second part, by Lemma~\ref{Lemma:NewRho} and $\lambda
/n\geq
c\|\Gamma^{-1}\widetilde S\|_\infty$, we have $\varrho_{c} \leq
\frac
{\lambda\sqrt{s}}{n\kappa_{\cc}}\frac{1+\cc}{c}$. Lemma~\ref
{Lemma:NewKappa} yields $\kk\geq\kappa_{\cc}$ and recall $\bbb=
\lambda\sqrt{s}/(n\kk)$. Therefore,
\begin{eqnarray*}
\BB/(\bbb\kk) &\leq&\frac{1+\{({\lambda\sqrt{s}}/{(n\kappa
_{\cc})})
({(1+\cc)}/{c})\}\{{n\kk}/{(\lambda\sqrt{s})}\}}{\kk(1-\bbb^2)}\\
&\leq& \frac
{1+{(1+\cc)}/{c}}{\kappa_{\cc}(1-\bbb^2)}=\frac{\cc}{\kappa
_{\cc
}(1-\bbb^2)}
\leq\frac{2\cc}{\kappa_{\cc}},
\end{eqnarray*}
where the last inequality follows from the condition $\bbb\leq1/\sqrt{2}$.
Thus, it follows that $4\cc^2 ( \BB/(\bbb\bar\kappa)
)^2\leq(4\cc^2/\kappa_{\cc})^2$ which implies $\mathcal
{M}^*\subseteq
\mathcal{M}$.
\end{pf*}

\begin{pf*}{Proof of Lemma~\ref{Lemma:Proof-D}}
Recall that $\Gamma= \diag(\gamma_1,\ldots,\gamma_p)$. $\hat\beta
$ is
the solution of a conic optimization problem (see Section H.1 of supplementary material
\cite{BCWsupp}). Let $\hat a$ denote the solution to its dual problem:
$\max_{a\in\RR^n} \En[y_\ii a_\ii] \dvtx\break \|\Gamma^{-1}\En
[x_{\ii
j}a_\ii]\|_\infty\leq\lambda/n, \|a\|\leq\sqrt{n}$. By strong duality
$ \En[ y_\ii\hat a_\ii] = \frac{\|Y - X\hat\beta\|}{\sqrt{n}} +\break
\frac
{\lambda}{n}\sum_{j=1}^p \gamma_j|\hat\beta_j|$.\vspace*{2pt}
Moreover, by the first-order optimality conditions,\break  $\En[x_{\ii j}\hat
a_\ii]\hat\beta_j =
\lambda\gamma_j|\hat\beta_j|/n$ holds for every $j=1,\ldots,p$.
Thus, we
have
\[
\En[ y_\ii\hat a_\ii] = \frac{\|Y - X\hat\beta\|}{\sqrt{n}} + \sum
_{j=1}^p \En[x_{\ii j}\hat
a_\ii]\hat\beta_j= \frac{\|Y - X\hat
\beta\|
}{\sqrt{n}} + \En \Biggl[\hat
a_\ii\sum_{j=1}^p
x_{\ii j}\hat\beta_j \Biggr].
\]
Rearranging the terms, we have $\En[ (y_\ii-x_\ii'\hat\beta) \hat
a_\ii] = \|Y - X\hat\beta\|/\sqrt{n}$.

If $\|Y-X\hat\beta\|=0$, we have $\sqrt{\hat Q(\hat\beta)} = 0$ and
the statement of the lemma trivially holds. If $\|Y-X\hat\beta\|>0$,
since $\|\hat a\|\leq
\sqrt{n}$ the equality can only hold for $\hat a =
\sqrt{n}(Y-X\hat\beta)/\|Y-X\hat\beta\|=(Y-X\hat\beta)/\sqrt {\hat
Q(\hat\beta)}$.\vadjust{\eject}

Next, note that for any $j \in\supp(\widehat\beta)$ we have $
\En[ x_{\ii j}\hat a_\ii] = \sign(\hat\beta_j)\lambda\gamma_j/n$.
Therefore, for any subset $G \subseteq\supp(\widehat\beta)$ we have
\begin{eqnarray*}
&& \sqrt{\hat Q(\hat\beta)} \sqrt{|G|}\lambda\\
&&\qquad=
\bigl\| \Gamma^{-1} \bigl(X'(Y - X\hat\beta)
\bigr)_{G} \bigr\|
\\
&&\qquad \leq\bigl\|\Gamma^{-1} \bigl(X'(Y - X
\beta_0) \bigr)_{G} \bigr\| +\bigl\|\Gamma^{-1}
\bigl(X'X( \beta_0-\hat\beta) \bigr)_{G} \bigr\|
\\
&&\qquad \leq\sqrt{|G|} n\bigl\|\Gamma^{-1}\En \bigl[x_\ii(\sigma
\varepsilon_\ii+r_\ii) \bigr]\bigr\|_{\infty}\\
&&\qquad\quad{} + n \sqrt
{\phi_{\mathrm{max}}\bigl(|G\setminus T|,\Gamma^{-1}\En
\bigl[x_\ii x_\ii' \bigr]
\Gamma^{-1}\bigr)} \|\hat\beta- \beta_0 \|_{2,n}
\\
&&\qquad = \sqrt{|G|} n\sqrt{\widehat Q(\beta_0)}\bigl\|\Gamma^{-1}
\widetilde S\bigr\|_{\infty} \\
&&\qquad\quad{}+ n \sqrt{\phi_{\mathrm{max}}\bigl(|G
\setminus T|, \Gamma^{-1}\En \bigl[x_\ii
x_\ii' \bigr]\Gamma^{-1}\bigr)} \|\hat \beta-
\beta_0 \|_{2,n},
\end{eqnarray*}
where we used
\begin{eqnarray*}  \bigl\|\Gamma^{-1} \bigl(X'X(
\hat\beta-\beta_0) \bigr)_{G}\bigr \| &\leq&\sup
_{\|\alpha_{T^c}\|_0\leq|G\setminus T|, \|\alpha\|\leq1}\bigl| \alpha '\Gamma^{-1}X'X(
\hat\beta-\beta_0)\bigr|
\\
&\leq&\sup_{\|\alpha_{T^c}\|_0\leq|G\setminus T|, \|\alpha\|\leq
1}\bigl\| \alpha'
\Gamma^{-1}X'\bigr\|\bigl\|X(\hat\beta- \beta_0)\bigr\|
\\
%&=& \sup_{\|\alpha_{T^c}\|_0\leq\hat m, \|\alpha\|\leq1}\sqrt{|
&\leq & n\sqrt{\phi_{\mathrm{max}}\bigl(|G
\setminus T|, \Gamma^{-1}\En \bigl[x_\ii
x_\ii' \bigr]\Gamma^{-1}\bigr)}\| \hat\beta-
\beta_0\|_{2,n}.
\end{eqnarray*}
\upqed\end{pf*}

\begin{pf*}{Proof of Theorem~\ref{Thm:2StepNonparametric}}
In this
proof, let
$f = (f_1,\ldots,f_n)'$, $R = (r_1,\ldots,\break  r_n)'$, $\varepsilon=
(\varepsilon
_1,\ldots,\varepsilon_n)'$ ($n$-vectors) and $X = [x_1;\ldots;x_n]'$ (an
$n\times p$ matrix).
For a set of indices $S \subset\{1,\ldots,p\}$, define $\mathcal
{P}_{S} = X[S](X[S]'X[S])^{-} X[S]'$,
where we interpret $\mathcal{P}_{S}$ as a null operator if $S$ is
empty. We have that $f - X\widetilde\beta= ( I - \mathcal{P}_{\hat
T})f - \sigma\mathcal{P}_{\hat T}\varepsilon$, where $I$ is the identity
operator. Therefore,
%
%e6.4 #&#
%
\begin{eqnarray}
\label{eqPL} %
\sqrt{n}\|
\beta_0 - \widetilde\beta\|_{2,n} &=&\| X\beta_0
- X\widetilde\beta\| \nonumber\\
&=&| f - X\widetilde\beta- R \|
\nonumber
\\[-8pt]
\\[-8pt]
\nonumber
& =&\bigl \| (I-\mathcal{P}_{\hat T})f - \sigma\mathcal{P}_{\hat
T}
\varepsilon- R \bigr\| \\
&\leq&\bigl\|(I- \mathcal{P}_{\hat T})f\bigr\| +\sigma\|
\mathcal{P}_{\hat T} \varepsilon\|+ \|R\| \nonumber
\end{eqnarray}
where $\|R\| \leq\sqrt{n} c_s$.
Since for $\widehat m = |\hat T\setminus T |$, we have
\[
\bigl\| X[\widehat T] \bigl(X[\widehat T]'X[\widehat T]
\bigr)^{-}\bigr\|_{\mathrm{op}} \leq\sqrt{1/\phi_{\mathrm{min}}
\bigl(\widehat m, \En \bigl[x_\ii x_\ii'
\bigr] \bigr)}=\sqrt{1/ \phi_{\mathrm{min}}(\widehat m)},
\]
[where the bound is interpreted as $+\infty$ if $\phi_{\mathrm
{min}}(\widehat m) =
0$], the term $\|\mathcal{P}_{\hat T }\varepsilon\|$ in (\ref{eqPL})
satisfies
\[ \|\mathcal{P}_{\hat T}\varepsilon\|  \leq
\sqrt{1/\phi_{\mathrm{min}}(\widehat m)} \bigl\|X[\hat T]'\varepsilon /
\sqrt{n}\bigr\| \leq \sqrt{|\widehat T|/\phi_{\mathrm{min}}(\widehat m)}
\bigl\|X'\varepsilon /\sqrt{n} \bigr\|_\infty.
\]

Therefore, we have
$
\|\widetilde\beta- \beta_0\|_{2,n} \leq\frac{\sigma\sqrt{s+\hat m
}\|\En[x_\ii\varepsilon_\ii]\|_\infty}{\sqrt{\phi_{\mathrm
{min}}(\hat m)}} +
c_s +
c_{\widehat T}$, where $c_{\widehat T} = \min_{\beta\in\RR^p} \sqrt {\En
[(f_\ii-x_\ii'\beta_{\hat T})^2]}$. Since $\supp(\widehat\beta
)\subseteq\widehat T$ and (\ref{MainFSCondition}) holds,
\begin{eqnarray*}c_{\widehat T} &=& \min_{\beta\in\RR^p}
\bigl\{\En \bigl[ \bigl(f_\ii-x_\ii'
\beta_{\hat T} \bigr)^2 \bigr] \bigr\}^{1/2} \leq
\bigl\{ \En \bigl[ \bigl(f_\ii-x_\ii'
\widehat\beta \bigr)^2 \bigr] \bigr\}^{1/2}
\\
& \leq& c_s + \|\beta_0-\widehat\beta\|_{2,n}
\leq c_s + 2\sqrt{\hat Q(\beta_0)}\BB,
\end{eqnarray*}
where we have used Theorem~\ref{Thm:NewRateSquareRootLASSONonparametric}.
\end{pf*}

\begin{pf*}{Proof of Theorem~\ref{Thm:Perfect}}
Note that because
$\sigma= 0$ and $c_s=0$, we have $\sqrt{\widehat Q(\beta_0)} = 0$ and
$\sqrt{\widehat Q(\hat\beta)} = \|\hat\beta-\beta_0\|_{2,n}$.
Thus, by
optimality of $\hat\beta$ we have
$ \|\hat\beta-\beta_0\|_{2,n} + \frac{\lambda}{n}\|\Gamma\hat
\beta\|
_1\leq\frac{\lambda}{n}\|\Gamma\beta_0\|_1$ which implies\vadjust{\goodbreak} $\|
\Gamma
\hat\beta\|_1\leq\|\Gamma\beta_0\|_1$. Moreover, $\delta= \hat
\beta
-\beta_0$ satisfies
$\| \delta\|_{2, n} \leq\frac{\lambda}{n} ( \|\Gamma\beta_0\|_1-
\|
\Gamma\hat\beta\|_1) \leq\bbb\| \delta\|_{2,n} $,
where $\bbb= \frac{\lambda\sqrt{s}}{n\kk} < 1$. Hence, $\|\delta\|
_{2,n} =
0$.

Since $\|\Gamma\hat\beta\|_1\leq\|\Gamma\beta_0\|_1$ implies
$\delta
\in\Delta_1$, it follows
that $0 = \sqrt{s}\|\delta\|_{2,n} \geq\|\Gamma\delta_T\|_1/\kappa
_1\geq\frac{1}{2}\|\Gamma\delta\|_1/\kappa_1$, which implies that
$\delta=0$ if $\kappa_1 >0$.
\end{pf*}

\begin{pf*}{Proof of Theorem~\ref{Thm:Symmetric}}
If $\lambda/n\geq c\|\Gamma^{-1}\widetilde S\|_\infty$ and $\bbb=
\lambda\sqrt{s}/[n\kk] < 1$, by Theorem~\ref
{Thm:NewRateSquareRootLASSONonparametric} we have
$\|\hat\beta-\beta_0\|_{2,n} \leq2\sqrt{\widehat Q(\beta_0)}\BB$,
and the bound on the prediction norm follows by
$ \sqrt{\widehat Q(\beta_0)} \leq c_s + \sigma\sqrt{\En
[\varepsilon
_\ii^2]}$.

Thus, we need to show that the choice of $\lambda$ and $\Gamma$ ensures
the event $\lambda/n\geq c\|\Gamma^{-1}\widetilde S\|_\infty$ with
probability no less than $1-\alpha-\eta_1-\eta_2$. Since $\gamma
_j=\max_{1\leq i \leq n} |x_{ij}|\geq\En[x_{\ii j}^2] = 1$, by the
choice of
$u_n$ we have
\begin{eqnarray*} P \biggl( c\bigl\|\Gamma^{-1}\widetilde S
\bigr\|_\infty> \frac{\lambda}{n} \biggr) &\leq& P \biggl( { \max
_{1\leq j\leq p}} \frac{ c |\En[(\sigma
\varepsilon_\ii+r_\ii)x_{\ii j}]|}{\gamma_j\sqrt{\En[(\sigma
\varepsilon_\ii
)^2]}} > \frac{\lambda}{n(1+u_n)^{1/2}} \biggr) +
\eta_1 \\
&\leq& I + \mathit{II} + \eta_1,
\\
 I&:=& P \biggl( { \max_{1\leq j\leq p}} \frac{ |\En[\varepsilon
_\ii
x_{\ii
j}]|}{\gamma_j\sqrt{\En[\varepsilon_\ii^2]}} >
\frac{\sqrt{2\log
(2p/\alpha
)}}{\sqrt{n}} \biggr),\\
 \mathit{II}&:=& P \biggl( \frac{ \|\En[r_\ii x_{\ii
}]\|
_\infty}{\sqrt{\En[(\sigma\varepsilon_\ii)^2]}} >
\frac
{(1+u_n)^{1/2}}{\sqrt{n}} \biggr).
\end{eqnarray*}
We invoke the following lemma, which is proven in \cite{BC-PostLASSO}---see step 2 of the proof of~\cite{BC-PostLASSO}'s
Theorem~2; for completeness, supplementary material \cite{BCWsupp}
also provides the proof.
%
%le6 #&#
%
\begin{lemma}\label{Lemma:InfNormApprox}
Under Condition \ref{condASM}, we have $ \| \En[ x_\ii r_\ii]\|_\infty\leq
\min
\{\frac{\sigma}{\sqrt{n}}, c_s \}$.
\end{lemma}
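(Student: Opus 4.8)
The plan is to prove the two bounds $\|\En[x_ir_i]\|_\infty \le c_s$ and $\|\En[x_ir_i]\|_\infty \le \sigma/\sqrt{n}$ separately, since the claim is their minimum. Throughout I use that $r_i = f_i - x_i'\beta_0$, that $c_s = (\En[r_i^2])^{1/2} = \|r\|_{\Pn,2}$, and that the regressors are normalized so $\En[x_{ij}^2]=1$ for every $j$. The first bound follows at once from the Cauchy--Schwarz inequality: for each coordinate $j$,
\[
|\En[x_{ij}r_i]| \le (\En[x_{ij}^2])^{1/2}(\En[r_i^2])^{1/2} = c_s,
\]
and maximizing over $j$ gives $\|\En[x_ir_i]\|_\infty \le c_s$. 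This step uses nothing beyond the normalization of the regressors.

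The second bound is where the optimality of $\beta_0$ as a solution of the oracle problem (\ref{oracle}) must be exploited, and this is the step carrying the real content. Write $g(\beta) = \En[(f_i-x_i'\beta)^2] + \sigma^2\|\beta\|_0/n$ and $T = \supp(\beta_0)$, and probe $g$ along the coordinate directions $e_j$. For $j \in T$, the perturbation $\beta_0 \mapsto \beta_0 + te_j$ leaves $\|\beta_0\|_0$ unchanged for all small $t$, so stationarity of the smooth quadratic part forces $\En[x_{ij}r_i]=0$. For $j\notin T$, any nonzero perturbation raises the $\ell_0$-count by one, so the optimality of $\beta_0$ gives, for all $t\neq 0$,
\[
0 \le g(\beta_0+te_j)-g(\beta_0) = -2t\,\En[x_{ij}r_i] + t^2 + \frac{\sigma^2}{n},
\]
where I again used $\En[x_{ij}^2]=1$. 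Minimizing the right-hand side over $t$ (the minimizer is $t=\En[x_{ij}r_i]$) yields $(\En[x_{ij}r_i])^2 \le \sigma^2/n$, \ie $|\En[x_{ij}r_i]| \le \sigma/\sqrt{n}$. Since this bound holds trivially when $j\in T$, maximizing over $j$ gives $\|\En[x_ir_i]\|_\infty \le \sigma/\sqrt{n}$, and combining the two displays proves the lemma.

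The main obstacle---really the only place where care is needed---is the case $j\notin T$: one must account for the discrete jump of size $\sigma^2/n$ in the $\ell_0$-penalty that occurs when a previously inactive coordinate is switched on, since it is precisely this penalty increment that produces the $\sigma/\sqrt{n}$ rate rather than a trivial bound. Everything else is routine calculus.
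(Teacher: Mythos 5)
Your proof is correct and follows essentially the same route as the paper's: Cauchy--Schwarz with the normalization $\En[x_{ij}^2]=1$ for the $c_s$ bound, the first-order condition on the active support for $j\in T$, and the perturbation $\beta_0\mapsto\beta_0+te_j$ with the $\sigma^2/n$ penalty increment, minimized at $t^*=\En[x_{ij}r_i]$, for $j\notin T$. No substantive differences.
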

By Lemma~\ref{Lemma:InfNormApprox}, $\|\En[r_\ii x_\ii]\|_\infty
\leq
\sigma/\sqrt{n}$ and $P(\En[\varepsilon_\ii^2] \leq\{1+u_n\}^{-1})
\leq
\eta_2$, we have $\mathit{II}\leq P ( \sqrt{\En[(\varepsilon_\ii)^2]}\leq
\{
1+u_n\}^{-1/2} ) \leq\eta_2$. Also,
\[
I \leq p \max_{1\leq j\leq p} P \biggl( \frac{ \sqrt{n}|\En
[\varepsilon
_\ii
x_{\ii j}]|}{\sqrt{\En[x_{\ii j}^2\varepsilon_\ii^2]}} > \sqrt {2
\log (2p/\alpha)} \biggr) \leq\alpha
\]
where we used that $\gamma_j\sqrt{\En[\varepsilon_\ii^2]} \geq
\sqrt{\En
[x_{\ii j}^2\varepsilon_\ii^2]}$,
the union bound, and the subGaussian inequality for self-normalized
sums stated in Theorem~2.15 of \cite{delapena}, since $\varepsilon_i$'s
are independent and symmetric by assumption.\vspace*{-2pt}
\end{pf*}

\begin{pf*}{Proof of Corollary~\ref{Thm:t(2)}}
See supplementary material \cite{BCWsupp}.\vspace*{-2pt}
\end{pf*}

%s7 #&#
\section{Proofs of Section~\texorpdfstring{\lowercase{\protect\ref{cec:primitive}}}{4}}%\ref{Sec:Extreme}}
\vspace*{-2pt}
\begin{pf*}{Proof of Theorem~\ref{CorFinite:SqrtLASSO}}
The proof is given in supplementary material \cite{BCWsupp} and follows from Theorems \ref{Thm:NewRateSquareRootLASSONonparametric}--\ref{Thm:Sparsity}
with the help of Lemma~7 in supplementary
material~\cite{BCWsupp}.\vspace*{-2pt}
\end{pf*}

\begin{pf*}{Proof of Corollary~\ref{Thm:SigmaInference}} See supplementary material \cite
{BCWsupp}.\vspace*{-2pt}
\end{pf*}

%s8 #&#
\section{Proofs for Section~\texorpdfstring{\lowercase{\protect\ref{sec5}}}{5}}
\vspace*{-2pt}
\begin{pf*}{Proof of Theorem~\ref{theorem:semiparametric}} Throughout
the proof,
we use the notation
\[
B(w): = \max_{j, k} \sup_{\nu\in\Theta\times T} \bigl|
\partial_{\nu_k} \psi_j(w, \nu)\bigr|,\qquad \tau_n:=
\sqrt{ s \log(p \vee n)/n}.
\]

\textit{Step} 1. (A preliminary rate result.) In this step, we claim that $\|
\hat\theta- \theta_0\| \lesssim_P \tau_n$.
By definition, $\| \En[\psi(w, \hat\theta, \hat h(z))]\| \leq
\varepsilon
_n$ and $\hat\theta\in\Theta$ with probability $1- o(1)$, which implies
via triangle inequality that with the same probability:
\[
\bigl\| \barEp \bigl[\psi \bigl(w, \theta, h_0(z) \bigr) \bigr]
\vert_{\theta= \hat
\theta} \bigr\| \leq\varepsilon_n + I_1 +
I_2 \lesssim_P \tau_n,
\]
where $I_1$ and $I_2$ are defined in step 2 below, and the last bound
also follows from step 2 below and from the numerical tolerance obeying
$\varepsilon_n = o(n^{-1/2})$ by assumption. Since by Condition \ref{condSP}(iv),
$2^{-1} ( \| J_n(\hat\theta- \theta_0)\| \wedge c)$ is weakly smaller
than the left-hand side of the display, we conclude that
$\| \hat\theta- \theta_0\| \lesssim_P \tau_n$, using that singular
values of $J_n$ are bounded away from zero uniformly in $n$ by
Condition \ref{condSP}(v).

\textit{Step} 2. (Define and bound $I_1$ and $I_2$.) We claim that:
\begin{eqnarray*}
I_1 &:=& \sup_{\theta\in\Theta} \bigl\llVert\En\psi
\bigl(w, \theta, \hat h(z) \bigr) - \En\psi \bigl(w, \theta, h_0(z)
\bigr) \bigr\rrVert\lesssim_P \tau_n,
\\
I_2 &:= & \sup_{\theta\in\Theta} \bigl\llVert\En\psi
\bigl(w, \theta, h_0(z) \bigr) - \barEp\psi \bigl(w, \theta,
h_0(z) \bigr) \bigr\rrVert\lesssim_P n^{-1/2}.
\end{eqnarray*}

Using Taylor's expansion, for $\tilde h(z;\theta,j)$ denoting a point
on a line connecting vectors $h_0(z)$ and $h(z)$, which can depend on
$\theta$ and $j$,
\begin{eqnarray*}
I_1 & \leq& \sum_{j=1}^d
\sum_{m=1}^M \sup_{\theta\in\Theta}
\bigl\vert\En \bigl[ \partial_{t_m} \psi_j \bigl(w, \theta,
\tilde h(z;\theta,j) \bigr) \bigl(\hat h_m(z) - h_{0m}(z)
\bigr) \bigr]\bigr \vert
\\
& \leq& d M \bigl\{\En B^2(w) \bigr\}^{1/2} \max
_{m} \bigl\{ \En \bigl(\hat h_m(z) -
h_{0m}(z) \bigr)^2 \bigr\}^{1/2},
\end{eqnarray*}
where the last inequality holds by definition of $B(w)$ given earlier
and H\"older's inequality.
Since $\barEp B^2(w) \leq C$ by Condition \ref{condSP}(ii), $\En B^2(w)\lesssim_P
1$ by Markov's inequality. By this, by Condition \ref{condAS}(i),
by $d$ and $M$ fixed, conclude that $I_1 \lesssim_P \tau_n$.

Using Jain--Marcus' theorem, as stated in Example~2.11.13 in \cite
{vdV-W}, we conclude that $\sqrt{n} I_2 \lesssim_P 1$.
Indeed the hypotheses of that example follow from the assumption that
$\Theta$ is a fixed compact subset of $\RR^d$,
and from the Lipschitz property, $ \| \psi(w, \theta, h_0(z)) - \psi(w,
\tilde\theta, h_0(z))\| \leq\sqrt{d} B(w) \| \tilde\theta- \theta
\|$
holding uniformly for all $\theta$ and $\tilde\theta$ in $\Theta$,
with $\barEp B^2(w) \leq C$.

\textit{Step} 3. (Main step.) We have that $\sqrt{n} \| \En\psi(w, \hat\theta,
\hat h(z) ) \| \leq\varepsilon_n \sqrt{n}$.
Application of Taylor's theorem and the triangle inequality gives
\[
\bigl\llVert\sqrt{n} \En\psi \bigl(w, \theta_0, h_0(z)
\bigr) + J_n \sqrt{n} (\hat\theta- \theta_0) \bigr
\rrVert\leq\varepsilon\sqrt{n} + \|\mathit{II}_1\| + \| \mathit{II}_2\| +
\|\mathit{II}_3\| = o_P(1),
\]
where $J_n = \barEp\partial_\theta\psi(w, \theta_0, h_0(z))$, the
terms $\mathit{II}_1$, $\mathit{II}_2$ and $\mathit{II}_3$ are defined and bounded below in step
4; the $o_P(1)$ bound
follows from step 4 and from $\varepsilon_n \sqrt{n} = o(1)$ holding by
assumption. Conclude using
Condition \ref{condSP}(iv) that
\[
\bigl\llVert J_n^{-1} \sqrt{n} \En\psi \bigl(w,
\theta_0, h_0(z) \bigr) + \sqrt{n} (\hat\theta-
\theta_0) \bigr\rrVert= o_P(1),
\]
which verifies the first claim of the theorem. Application of
Liapunov's central limit theorem in conjunction
with Condition \ref{condSP}(v) and the conditions on $\Omega_n$ imposed by the
theorem imply the second claim.

\textit{Step} 4. (Define and bound $\mathit{II}_1$, $\mathit{II}_2$ and $\mathit{II}_3$.) Let $\mathit{II}_1:=
(\mathit{II}_{1j})_{j=1}^d$ and $\mathit{II}_2 = (\mathit{II}_{2j})_{j=1}^d$, where
\begin{eqnarray*}
 \mathit{II}_{1j}&:=& \sum_{m = 1}^M
\sqrt{n} \En \bigl[ \partial_{t_m} \psi_j \bigl(w,
\theta_0, h_0(z) \bigr) \bigl(\hat h_m(z) -
h_{0m}(z) \bigr) \bigr],
\\
 \mathit{II}_{2j}&:=&\sum_{r,k = 1}^K
\sqrt{n} \En \bigl[ \partial_{\nu_k} \partial_{\nu_r}
\psi_j \bigl(w, \tilde\nu(w;j) \bigr) \bigl\{ \hat
\nu_r(w) - \nu_{0r}(w) \bigr\} \bigl\{ \hat
\nu_k(w) - \nu_{0k}(w) \bigr\} \bigr],
\\
 \mathit{II}_3&:=& \sqrt{n} \bigl( \En\partial_\theta\psi \bigl(w,
\theta_0, h_0(z) \bigr) - J_n \bigr) ( \hat
\theta- \theta_0),
\end{eqnarray*}
where $\nu_0(w): = ( \nu_{0k}(w))_{k=1}^K: = (\theta_0', h_0(z)')'$;
$K= d+ M$; $\hat\nu(w): =\break  ( \hat\nu_k(w))_{k=1}^K:=
(\hat\theta', \hat h(z)')'$, and $\tilde\nu(w;j)$ is a vector on the
line connecting $\nu_0(w)$ and $\hat\nu(w)$ that may depend on $j$. We
show in this step that $\|\mathit{II}_1\| + \|\mathit{II}_2\| + \|\mathit{II}_3\| \lesssim_P o(1)$.

The key portion of the proof is bounding $\mathit{II}_{1}$, which is very
similar to the argument
given in \cite{BellChenChernHans:nonGauss} (pages~2421--2423). We repeat
it here for completeness.
We split $\mathit{II}_{1} = \mathit{III}_{1} + \mathit{III}_{2} = (\mathit{III}_{1j})_{j=1}^d +
(\mathit{III}_{2j})_{j=1}^d$, where
\begin{eqnarray*}
\mathit{III}_{1j} &:= & \sum_{m=1}^M
\sqrt{n} \En \Biggl[ \partial_{t_m} \psi_j \bigl(w,
\theta_0, h_0(z) \bigr) \sum
_{l=1}^p P_{ml}(z) ( \hat
\beta_{ml} - \beta_{0ml}) \Biggr],
\\
\mathit{III}_{2j} &:= & \sum_{m=1}^M
\sqrt{n} \En \bigl[ \partial_{t_m} \psi_j \bigl(w,
\theta_0, h_0(z) \bigr) r_m(z) \bigr].
\end{eqnarray*}
Using H\"older inequality, $\max_j |\mathit{III}_{1j}| \leq M \max_{j,m,l} |
\sqrt{n} \En\dot\psi_{mjl}(w) | \| \hat\beta_m -\break  \beta_{0m} \|_1$.
By Condition \ref{condAS}(i) $\max_m \|\hat\beta_m - \beta_{0m}\|_1 \leq C
\sqrt{s} \tau_n$ with probability at least $1-\delta_n$.
Moreover, using $\Ep\dot\psi_{mjl} (w_i) = 0$ for all $i$, which
holds by the orthogonality property (\ref{eq:orthogonality}),
and that $\max_{j, m, l } \En| \dot\psi_{mjl}(w) |^2 \leq L^2_n$ with
probability at least $1-\delta_n$ by Condition \ref{condAS}(ii), we can apply
Lemma~\ref{Lemma: MDSN} on the moderate deviations for self-normalized
sum, following the idea in \cite{BellChenChernHans:nonGauss}, to
conclude that $\max_{j,m,l} | \sqrt{n} \En\dot\psi_{mjl}(w) | \leq
\sqrt{2 \log(pn)} L_n$ with probability $1- o(1)$. Note that this
application requires
the side condition $\sqrt{2 \log(pn)} M_n n^{-1/6} = o(1)$ be
satisfied for $M_n$ defined in Condition \ref{condAS}(ii), which indeed holds by
Condition \ref{condAS}(iii).
We now recall the details of this calculation:
\begin{eqnarray*}
&& P \Bigl( \max_{j,m,l}\bigl | \sqrt{n} \En\dot
\psi_{mjl}(w) \bigr| > \sqrt{2 \log(pn)} L_n \Bigr)
\\
&&\qquad \leq P \Bigl( \max_{j,m,l} \bigl| \sqrt{n} \En\dot
\psi_{mjl}(w) \bigr|/\sqrt{\En\bigl|\dot\psi_{mjl}(w) \bigr|^2 }
> \sqrt{2 \log(pn)} \Bigr) + \delta_n
\\
&& \qquad\leq d M p \max_{j,m,l} P \bigl(\bigl| \sqrt{n} \En\dot
\psi_{mjl}(w) \bigr|/\sqrt{\En\bigl|\dot\psi_{mjl}(w) \bigr|^2 }
> \sqrt{2 \log(pn)} \bigr) + \delta_n
\\
& &\qquad \leq d M p 2 \bigl(1- \Phi \bigl( \sqrt{2 \log(pn)} \bigr) \bigr) \bigl(1 +
o(1) \bigr) + \delta_n \leq d M p \frac{2}{pn} \bigl(1 + o(1)
\bigr) + \delta_n\\
&&\qquad = o(1),
\end{eqnarray*}
where the penultimate inequality occurs due to the application of
Lemma~\ref{Lemma: MDSN} on moderate deviations for self-normalized
sums. Putting bounds together we conclude that $\|\mathit{III}_{1}\| \leq\sqrt {d} \max_{j} |\mathit{III}_{1j}| \lesssim_P L_n \sqrt{ \log(p \vee n)} \sqrt{s}
\tau_n = o(1)$, where $o(1)$ holds by the growth restrictions imposed
in Condition \ref{condAS}(iii).

The bound on $\mathit{III}_2$ also follows similarly to \cite
{BellChenChernHans:nonGauss}. $\mathit{III}_{2j}$ is a sum of $M$ terms, each
having mean zero and variance of order $s \log(p \vee n)/n = o(1)$.
Indeed, the mean zero occurs because
\[
n^{-1/2} \sum_{i=1}^n \Ep \bigl[
\partial_{t_m} \psi_j \bigl(w_i,
\theta_0, h_0(z_i) \bigr)
r_m(z_i) \bigr] = n^{-1/2} \sum
_{i=1}^n \Ep \bigl[ 0 \cdot r_m(z_i)
\bigr] =0
\]
for each $m$th term, which holds by $\Ep[ \partial_{t_m} \psi_j(w_i,
\theta_0, h_0(z_i)) | z_i] =0$, that is, the orthogonality property
(\ref{eq:orthogonality}), and the law of iterated expectations. To
derive the variance bound, note that for each $m$th term the variance is
\[
n^{-1} \sum_{i=1}^n \Ep \bigl[
\bigl\{\partial_{t_m} \psi_j \bigl(w_i,
\theta_0, h_0(z_i) \bigr) \bigr
\}^2 r_m^2(z_i) \bigr] \leq C
\barEp \bigl[r^2_m(z) \bigr] \leq C^2 s
\log(p \vee n)/n,
\]
which holds by $ \Ep[ \{\partial_{t_m} \psi_j(w_i, \theta_0,
h_0(z_i))\}^2|z_i] \leq
\Ep[ B^2(w) |z_i] \leq C$ a.s. by virtue of Condition \ref{condSP}(iii), and
the law iterated expectations; the last bound in the display holds by
\ref{condAS}(i). Hence, $\operatorname{var}(\mathit{III}_{2j})\leq M^2 C^2 s \log(p
\vee
n)/n\lesssim s \log(p \vee n)/n = o(1)$. Therefore, $\|
\mathit{III}_{2}\| \leq\sum_{j=1}^d | \mathit{III}_{2j}| \lesssim_P \sqrt{s \log(p
\vee
n)/n} = o(1)$ by Chebyshev's inequality.

To deduce that $\|\mathit{II}_2\| = o_P(1)$, we use Condition \ref{condAS}(i)--(iii), the
claim of step~1, and H\"older inequalities, concluding that
\[
\max_j |\mathit{II}_{2j}| \leq\sqrt{n} K^2
L_n \max_k \En \bigl\{ \hat
\nu_k(w) - \nu_{0k}(w) \bigr\}^2
\lesssim_P \sqrt{n} L_n \tau_n^2
= o(1).
\]

Finally, since $\|\mathit{II}_3\| \leq\sqrt{n} \| ( \En\partial_\theta
\psi(w, \theta_0, h_0(z)) - J_n ) \|_{\mathrm{op}} \| \hat\theta-
\theta
_0\|$ and since $\| \En\partial_\theta\psi(w, \theta_0, h_0(z)) -
J_n \|_{\mathrm{op}} \lesssim_P n^{-1/2}$ by
Chebyshev's inequality, using that $\barEp B^2(w) \leq C$ by Condition
\ref{condAS}(ii), and $\| \hat\theta- \theta_0\| \lesssim_P \tau_n $ by step 1,
conclude that $\|\mathit{II}_3\| \lesssim_P \tau_n =o(1)$.
\end{pf*}
\end{appendix}

%odd numbered page.

%%%%%--------------------------------------------------------------------------------

%%%%%%%%%%%%%%%%%%%%

\section*{Acknowledgements}
We are grateful to the Editors and two referees for thoughtful comments
and suggestions, which helped improve the paper substantially. We also
thank seminar participants at the 2011 Joint Statistical Meetings, 2011
INFORMS, Duke and MIT for many useful suggestions. We gratefully
acknowledge research support from the NSF.

%and Simulations}

% zodis "Acknowledgments" paliekamas pagal autoriu

\begin{supplement}[id=suppA]
\stitle{Supplementary material}
\slink[doi]{10.1214/14-AOS1204SUPP} %[doi,text={...}] - jei reikia
%suskaldyti doi
\sdatatype{.pdf}
\sfilename{aos1204\_supp.pdf}
\sdescription{The material
contains deferred proofs, additional theoretical results on convergence rates
in $\ell_2, \ell_1$ and $\ell_\infty$, lower bound on the prediction
rate, and Monte-Carlo simulations.}
\end{supplement}

%%%%%%%%%%%%%%%%%%%%

%{ \bibliography{biblioSqLASSO}}

% imsref loaded by akundreckaite, 2014-03-14 07:47:51
%
% imsref loaded by akundreckaite, 2014-03-18 15:11:58
% imsref loaded by akundreckaite, 2014-03-18 15:33:56
% imsref loaded by akundreckaite, 2014-03-18 15:41:21

%

\printaddresses

\end{document}